\newtheorem{theorem}{Theorem}
\newtheorem{proposition}{Proposition}
\newtheorem{lemma}{Lemma}
\newtheorem{ass}{Assumption}
\newtheorem{definition}{Definition}
\newtheorem{remark}{Remark}
\newcommand\EE {\mathbb E}
\newcommand\FF {\mathbb F}
\newcommand\RR {\mathbb R}
\newcommand\cE {\mathcal E}
\newcommand\cN {\mathcal N}
\def\qed{\hskip6pt\vrule height6pt width5pt depth1pt}
\title{Optimal brokerage contracts in Almgren-Chriss model with multiple clients}
\author{G. Alonso Alvarez, S. Nadtochiy, and K. Webster
\footnote{Address all correspondence to Sergey Nadtochiy, lllinois Institute of Technology, Chicago, IL, USA, snadtochiy@hawk.iit.edu.}
\footnote{The first and second authors acknowledge partial support from the NSF CAREER grant DMS-1651294.}
}
\date{April 2022}
\begin{document}

\maketitle

\begin{abstract}
This paper constructs optimal brokerage contracts for multiple (heterogeneous) clients trading a single asset whose price follows the Almgren-Chriss model. The distinctive features of this work are as follows: (i) the reservation values of the clients are determined endogenously, and (ii) the broker is allowed to not offer a contract to some of the potential clients, thus choosing her portfolio of clients strategically. We find a computationally tractable characterization of the optimal portfolios of clients (up to a digital optimization problem, which can be solved efficiently if the number of potential clients is small) and conduct numerical experiments which illustrate how these portfolios, as well as the equilibrium profits of all market participants, depend on the price impact coefficients.
\end{abstract}

\medskip

{\bf Keywords:} brokerage, optimal contract, price impact, equilibrium.

\section{Introduction}

This paper investigates the optimal design of brokerage fees in Almgren-Chriss model (see \cite{AlmgrenChriss}). We consider a population of investors (a.k.a. potential clients or agents) who can either trade directly in the market (and be subject to trading costs due to their price impact) or to trade via a broker (i.e., to become broker's clients) who charges a contingent fee for this service. The main goal of our investigation is a tractable characterization of the optimal brokerage fees and the optimal choice of a portfolio of clients. 
This question is formulated as an optimal contract problem with multiple agents, where the broker plays the role of a principal who designs the fees. 

The problem considered herein formally fits within the optimal contract theory, which is concerned with the design of compensation (or incentive) schemes, referred to as contracts. In the classical example of an optimal contract problem (see, among others, \cite{HolmstromMilgrom}, \cite{CvitanicZhang}), a principal hires an agent to work on a project in exchange for a payment (contract). The payment depends on the information available to the principal which may be affected by the agent's action. The agent chooses his action to maximize his objective, which depends on the payment promised by the principal and on the action itself (e.g., the agent may not like to work very hard). The principal aims to choose the contract so that it maximizes her objective, which also depends on the payment to the agent and on the agent's action. This leads to a pair of nested optimization problems, also known as the Stackelberg game.

In the present setting, the broker observes the trading strategy of her client precisely, which leads to a so-called first best optimal contact problem between the broker and her individual client. Such optimal contract problems are known to have simple solutions (especially in the case of risk-neutral preferences, as herein), and this is confirmed by Proposition \ref{prop:multiagents.optContracts} which provides the optimal brokerage fees in the present setting. However, our setting implies the following additional challenges. First, the model considered herein includes multiple agents whose objectives are coupled via their price impact. Thus, a collective response of the agents to a contract chosen by the principal is given by a collection of strategies that form a Nash equilibrium among the agents. The principal, then, chooses a contract so as to maximize her objective that depends on the associated equilibrium strategies of the agents. The optimal contract problems with multiple agents are considered, for example, in \cite{ElliePossamai}, \cite{CasgrainJaimungal}, \cite{HuangJaimungal}. It is worth mentioning that we consider heterogenous agents, as one of our goals is to study how the characteristics of the agents (i.e., their price impact coefficients) affect the optimal choice of the portfolio of clients.
Another important feature that makes the present problem non-standard is the fact that the reservation value of each agent (which represents the minimum objective value that the agent must be able to attain in order to accept a proposed contract) is determined endogenously. Indeed, we naturally assume that the reservation value of each agent equals his maximum objective value in case he decides to trade directly in the market. The latter value depends on the equilibrium strategies of other agents, which in turn depend on the contract chosen by the broker. The third distinctive feature of the present work is that, unlike the classical optimal contract problems, the broker is not constrained to offer a contract to every agent and choses her portfolio of clients strategically. In particular, one of our main questions is to determine the optimal portfolio of clients for the broker.

To the best of our knowledge, to date there exist no results on the optimal brokerage fees in the presence of price impact and multiple agents. The recent paper \cite{Frei} studies a related problem in which a single agent hires a financial intermediary to trade a risky asset on his behalf, and pays a fee (chosen by the agent) for this service. This is also related to earlier literature on delegated portfolio management: see, e.g., \cite{Starks}, \cite{Stoughton}, \cite{OuYang}, \cite{CadenillasCvitanic}, \cite{BasakPavlova}, \cite{CvitanicXing}, \cite{NZ}. Despite obvious similarities, the important conceptual differences between the latter works and the present one are that, herein, (i) the fee is designed by the broker, (ii) multiple agents are present, and (iii) ex ante the trading strategy is determined by the client as opposed to the broker, who nevertheless does observe the strategy.

\smallskip

The rest of the paper is organized as follows. Section \ref{se:setup} introduces the model and the main objectives. Section \ref{se:optContract} constructs optimal contracts (Proposition \ref{prop:multiagents.optContracts}) given (arbitrary) reservation values of the agents and broker's (arbitrary) choice of clients. Section \ref{se:equilibrium} describes the unique equilibrium among those agents who are not offered a contract (i.e., among independent agents). Section \ref{se:optimalOF} defines the reservation values of the agents endogenously (Definition \ref{def:endog.R}) and shows how to compute the maximum objective value of the broker given an arbitrary portfolio of clients (Theorem \ref{thm:multiagents.main}). The latter result allows one to find an optimal portfolio of clients for the broker by solving a digital optimization problem. Section \ref{se:numerics} considers several numerical experiments, where the aforementioned digital optimization problem is solved by exhaustive search and the broker's optimal portfolio of clients, as well as the profits of the broker and of the agents, are analyzed as functions of the price impact parameters.

\section{The setup}
\label{se:setup}


We consider $N$ agents, each of whom can trade a single risky asset\footnote{A riskless asset is implicitly available but yields zero return.} that follows the Almgren-Chriss model over the time interval $[0,T]$. In addition to the agents, we assume the presence of a single broker. Each agent makes a decision (once, before the trading starts) on whether he trades the asset directly or via the broker, and these decisions are represented by the vector $\theta \in \{0,1\}^N$: $\theta_i=1$ if and only if the $i$-th agent trades via the broker. For convenience, we also denote by $\cN(\theta) = \{n_1,\cdots, n_r\} \subset \{1,\cdots,N\}$, $0<r<N$, the indices of the agents that trade via broker. We refer to the agents who trade via the broker as clients and to those who trade directly in the market as independent. The trading activity of an independent agent affects the price of the asset via the impact coefficients of this agent. The trading of a client is done via the broker and hence affects the price of the asset via the impact coefficients of the broker. Of course, the broker may be able to offer lower price impact to her clients, but she also charges each of them a fee for this service. Even though we introduced $\theta$ as a vector of agents' decisions, it is important to realize that these decisions are ultimately controlled by the broker, who decides whether to offer a contract to a particular agent or not (the acceptance of each offered contract is ensured by matching the reservation value of the associated agent). Therefore, in the remainder of the paper, we refer to $\theta$ as the choice of clients made by the broker.

We fix a probability space $(\Omega,\mathcal{F},\mathbb{P})$ and consider a standard Brownian motion $B$ on this space. Let $\FF^B$ be the filtration generated by the Brownian motion $B$. We define the set of admissible controls of a single agent as 
\begin{equation}
    \mathcal{U} : = \{\nu \in L^2([0,T]\times\Omega),\quad \nu\text{ is }\FF^B\text{-adapted} \}.
\end{equation}
The (controlled) inventory of agent $i=1,\ldots,N$, who uses a control $\nu^i$, is given by the process
$$
X^i_t = x^i_0 + \int_0^t \nu^i_s ds,\quad t\in[0,T].
$$

Next, we recall the price process for the traded asset in the Almgren-Chriss model: 
\begin{equation}
\label{states}
P_t =  \mu t + \sigma B_t  + \sum_{i=1}^N (1-\theta_i)\kappa_i \nu_{t}^i +\kappa_0 \sum_{i=1}^N \theta_i \nu_{t}^i  + \sum_{i=1}^N (1-\theta_i)\lambda_i X_t^i +\lambda_0 \sum_{i=1}^N \theta_i X^i_t,
\end{equation}
where $\{\kappa_j\}_{j=1}^N$, $\{\lambda_j\}_{j=1}^N$ are the coefficients of temporary and permanent price impacts of the agents, $\kappa_0$, $\lambda_0$ are the corresponding coefficients of the broker, $\sigma>0$ is the volatility of the asset price, and $\mu\in\RR$ is its drift (i.e., trading signal). For convenience, we assume that $\nu_T^i=0$ for all $i$, so that the temporary impacts of the agents do not affect the terminal price.\footnote{This is needed to simplify the notation in \eqref{eq.sec2.clientObj.def}--\eqref{eq.sec2.indepObj.def}, as the agents in our setting interact through both the permanent and the temporary impacts.}

\medskip

Let $\xi^{n_1},\cdots, \xi^{n_r}$ be the fees that the broker charges to her clients. 
We assume that the fees are of the form $\xi^i = F^i(X^i,P)$, with measurable $F^i: H^{1}([0,T],\RR)\times C([0,T],\mathbb{R})\rightarrow \mathbb{R}$, where $H^{1}$ is the Sobolev space of order one, equipped with the natural norm. A client $i \in \cN(\theta)$ aims to maximize his expected profit:
\begin{equation}\label{eq.sec2.clientObj.def}
    \begin{split}
    J^{i,\theta}(\nu^i,\nu^{-i},\xi^i) &= \EE \left(P_T X^i_T
    -\frac{\lambda_i}{2}(X^i_T)^2-\int_0^T \nu^i_t P_tdt -\xi^i\right)  \\
    &=\EE \left( \int_0^T X^i_t\left[\mu+\sum_{j=1}^N (1-\theta_j)\lambda_j \nu^j_t+\lambda_0\sum_{j=1}^N\theta_j\nu^j_t\right]dt \right.\\
    &\left.-\int_0^T \nu^i_t\left[\sum_{j=1}^N (1-\theta_j)\kappa_j \nu^j_t+\kappa_0\sum_{j=1}^N\theta_j\nu^j_t\right] dt - \frac{\lambda_i}{2}(X^i_T)^2-\xi^i\right),
    \end{split}
\end{equation}
where $\nu^{-i}\in \mathcal{U}^{N-1}$ denotes the trading rate of the rest of agents. 
Similarly, an independent agent $i\notin\cN(\theta)$ maximizes his expected profit:
\begin{equation}\label{eq.sec2.indepObj.def}
\begin{split}
J^{i,\theta}(\nu^i,\nu^{-i},\xi^i) &= \EE \left(P_T X^i_T-\frac{\lambda_i}{2}(X^i_T)^2-\int_0^T \nu^i_t P_tdt \right)  \\
    &=\EE \left( \int_0^T X^i_t\left[\mu+\sum_{j=1}^N (1-\theta_j)\lambda_j \nu^j_t+\lambda_0\sum_{j=1}^N\theta_j\nu^j_t\right]dt \right.\\
    &\left.-\int_0^T \nu^i_t\left[\sum_{j=1}^N (1-\theta_j)\kappa_j \nu^j_t+\kappa_0\sum_{j=1}^N\theta_j\nu^j_t\right]dt-\frac{\lambda_i}{2}(X^i_T)^2\right).
    \end{split}
\end{equation}

\begin{remark}
In the above objectives, the penalty $-\lambda_i (X^i_T)^2/2$ corresponds to the cost of liquidation of the inventory $X^i_T$ by agent $i$, over a long time period after time $T$. Even if the agent plans to liquidate his terminal position $X^i_T$ with the broker, in the resulting extended game the broker will charge the agent more by increasing the deterministic component of $\xi^i$. Due to the optimality of the contract, this additional charge should be equal to the liquidation cost of the agent. Assuming that the agent does not know the inventories, and hence the execution plans, of the other agents, he computes his execution cost assuming that the additional drift in the asset price during his liquidation is only due to his permanent impact (the temporary impact can be ignored if the liquidation period is large). This can be interpreted as the agent being naive, in the sense that he does not take into account that other agents will be liquidating their positions at the same time, hence, he only includes his own impact in the execution cost. Under such an assumption it is well known that the expected execution cost of the agent is equal to $\lambda_i (X^i_T)^2/2$, which is added to the fees that the agent pays to the broker.
\end{remark}

For a given combination of the strategies of other agents, we define the control problem of a client $i\in \cN(\theta)$ and of an independent agent $i\notin \cN(\theta)$, respectively, as: 
\begin{equation}\label{eq.sec2.clientOpt.def}
   V^{i,\theta}:=\sup_{\nu^i\in \mathcal{U}} J^{i,\theta}(\nu^i,\nu^{-i},\xi^i), \quad i\in \cN(\theta),
\end{equation}
\begin{equation}\label{eq.sec2.indepOpt.def}
   V^{i,\theta}:=\sup_{\nu^i\in \mathcal{U}} J^{i,\theta}(\nu^i,\nu^{-i}), \quad i\notin \cN(\theta).
\end{equation}

\begin{definition}\label{def:cE}
Given a choice of clients $\theta\in\{0,1\}^N$, as well as the associated indices $\cN(\theta) = \{n_1,\cdots, n_r\}$ and fees $\xi=(\xi^{n_1},\cdots,\xi^{n_r})$, we define $\cE(\theta,\xi)\subset \mathcal{U}^{N}$ as the set of all agents' strategies that form Nash equilibria in the game defined by \eqref{eq.sec2.clientOpt.def}--\eqref{eq.sec2.indepOpt.def}. Namely, $\bar{\nu} \in \cE(\theta,\xi)$ if and only if the following two conditions hold:
\begin{equation}
     J^{i,\theta}(\bar{\nu}^i,\bar{\nu}^{-i},\xi^i) \geq  J^{i,\theta}(\nu,\bar{\nu}^{-i},\xi^i), \quad \forall \nu \in \mathcal{U},
     \quad \forall i \in \cN(\theta),
\end{equation}
\begin{equation}
     J^{i,\theta}(\bar{\nu}^i,\bar{\nu}^{-i}) \geq  J^{i,\theta}(\nu,\bar{\nu}^{-i}), \quad \forall \nu \in \mathcal{U},
     \quad \forall i \notin \cN(\theta).
\end{equation}
\end{definition}

\medskip

The objective of the broker is given by the sum of expected fees in the best equilibrium attainable with these fees:
\begin{equation}\label{eq.JP.def}
    J^\theta_P(\xi) := \sup_{\nu\in \cE(\theta,\xi)}\EE\sum_{j\in\cN(\theta)} \xi^{j}.
\end{equation}
In the above, we make the standard assumption that, given a set of admissible contracts, the agents will choose an equilibrium that is best for the principal among all attainable equilibria.
To ensure that $\cE(\theta,\xi)\neq\emptyset$ and that the agents' reservation values are met, we introduce the set of admissible fees of the broker:
\begin{align}
&\Sigma(\theta) := \{ (\xi^{j})_{j\in\cN(\theta)} : \cE(\theta,\xi)\neq \emptyset,
\text{ and }  J^{j}(\bar{\nu},\xi^{j})\geq R^{j,\theta}, \quad  \forall j \in \cN(\theta),\, \forall \bar{\nu}\in \cE(\theta,\xi) \},
\label{eq.sec2.Sigma.def}
\end{align}
where $R^{j,\theta}$ is the reservation value of agent $j\in \cN(\theta)$.
Thus, we obtain the following ``local" maximization problem for the broker, given a choice of clients $\theta$:
\begin{equation}\label{eq.multiagents.broker.Opt.givenTheta}
    V^\theta_P := \sup_{\xi \in \Sigma(\theta)} J^\theta_P(\xi)=\sup_{\xi \in \Sigma(\theta)} \sup_{\nu\in \cE(\theta,\xi)}\EE \sum_{j\in\cN(\theta)} \xi^{j}.
\end{equation}
The optimal contract $\xi^*$ that attains the above supremum is constructed in Section \ref{se:optContract}.
Note that, in the present setting we do not assume that $R^{j,\theta}$ is given endogenously, because there is in fact a very natural endogenous definition of the reservation value of each agent. However, as the above optimal contract problem is of first-best type, with risk-neutral preferences, the value of $R^{j,\theta}$ is not important for the form of the optimal contract we design in Section \ref{se:optContract}. Therefore, in order to ease the notation, we postpone the definition of $R^{j,\theta}$ to Section \ref{se:optimalOF}.

Note that the ``global" optimization problem of the broker is to find an optimal $\theta^*\in \{0,1\}^N$ and respective optimal fees $\xi^{*}$, which amounts to solving
\begin{equation}\label{eq.multiagents.broker.Opt.global}
    V_P := \sup_{\theta \in\{0,1\}^N} V^\theta_P.
\end{equation}
The above is a discrete optimization problem. We do not provide a complete solution to this problem herein, assuming instead that it can be solved by an exhaustive search in case of a reasonably small $N$ or of a smaller subset of admissible $\{\theta\}$ (see Section \ref{se:numerics}). However, even to perform such an exhaustive search, one needs to have a numerically tractable representation of the value of $V^\theta_P$ for each $\theta$. The latter representation is the main subject of Sections \ref{se:equilibrium},\ref{se:optimalOF}.

\section{Optimal contract for a given $\theta$}
\label{se:optContract} 

The main result of this section is the following proposition which describes an optimal collection of fees offered by the broker, given a choice of clients $\theta$ and agents' reservation values $\{R^{i,\theta}\}$.

\begin{proposition}\label{prop:multiagents.optContracts}
For any $\theta\in\{0,1\}^N$ and $\{R^{i,\theta}\}$, the fees 
\begin{equation}
    \xi^{i,*}:=-R^{i,\theta}+P_TX^i_T - \int_0^T \nu^i_tP_tdt-\frac{\lambda_i}{2}(X^i_T)^2,\quad i \in\cN(\theta),
    \label{fee_optimal}
\end{equation}
are optimal for the problem \eqref{eq.multiagents.broker.Opt.givenTheta}. 
\end{proposition}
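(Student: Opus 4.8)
\emph{Proof strategy.} The crux of the argument is the observation that the proposed fee equals the client's pre-fee profit shifted by the (deterministic) reservation value, so that it renders every client indifferent across all of his strategies. Writing the pre-fee profit of client $i$ under a profile $\nu$ as $\Pi^i(\nu):=P_TX^i_T-\int_0^T\nu^i_tP_t\,dt-\frac{\lambda_i}{2}(X^i_T)^2$, the objective \eqref{eq.sec2.clientObj.def} reads $J^{i,\theta}(\nu^i,\nu^{-i},\xi^i)=\EE[\Pi^i(\nu)-\xi^i]$, and I would first substitute $\xi^{i,*}=-R^{i,\theta}+\Pi^i(\nu)$ to obtain $J^{i,\theta}(\nu^i,\nu^{-i},\xi^{i,*})=R^{i,\theta}$ for \emph{every} admissible $\nu$. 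Thus under $\xi^*$ each client's payoff is the constant $R^{i,\theta}$, independent of his own control, so every $\nu^i\in\mathcal{U}$ is a best response for every client.

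Next I would record that $\xi^*\in\Sigma(\theta)$. The reservation constraint in \eqref{eq.sec2.Sigma.def} holds (with equality) for \emph{every} $\bar\nu\in\cE(\theta,\xi^*)$ by the indifference computation above, and non-emptiness of $\cE(\theta,\xi^*)$ follows because, the clients being indifferent, a profile lies in $\cE(\theta,\xi^*)$ as soon as the independent agents are in mutual best response to some fixed choice of client controls; existence of such a profile is supplied by the independent-agent equilibrium analysis of Section~\ref{se:equilibrium}. This already gives $V^\theta_P\ge J^\theta_P(\xi^*)=\sum_{j\in\cN(\theta)}\big(\EE\Pi^j(\bar\nu^*)-R^{j,\theta}\big)$ for the broker-optimal equilibrium $\bar\nu^*$.

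For the matching upper bound, I would fix an arbitrary admissible $\xi\in\Sigma(\theta)$ and any $\bar\nu\in\cE(\theta,\xi)$. The reservation constraint yields $\EE\xi^j\le\EE\Pi^j(\bar\nu)-R^{j,\theta}$ for each $j\in\cN(\theta)$, so the broker's payoff satisfies $\EE\sum_j\xi^j\le\sum_j(\EE\Pi^j(\bar\nu)-R^{j,\theta})$. The key \emph{replication} step is then to note that this same profile $\bar\nu$ also belongs to $\cE(\theta,\xi^*)$: under $\xi^*$ the clients are indifferent, while the independent agents' objectives \eqref{eq.sec2.indepObj.def} do not involve the fees at all, so their best-response conditions at the unchanged client controls $\bar\nu^{\cN(\theta)}$ are identical under $\xi$ and under $\xi^*$. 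Evaluating the broker's payoff under $\xi^*$ at $\bar\nu$ gives exactly $\sum_j(\EE\Pi^j(\bar\nu)-R^{j,\theta})$, whence $J^\theta_P(\xi^*)\ge\EE\sum_j\xi^j$. Taking the supremum over $(\xi,\bar\nu)$ produces $J^\theta_P(\xi^*)\ge V^\theta_P$, and combined with the previous paragraph establishes optimality.

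The step I expect to be the main obstacle is the correct handling of the best-equilibrium selection built into both $J^\theta_P$ and $\Sigma(\theta)$, together with the coupling to the independent agents through the price \eqref{states}. This is resolved entirely by the replication observation: because the fee affects neither the independent agents' objectives nor (after the indifference reduction) the clients' incentives, the equilibrium set under $\xi^*$ contains every equilibrium arising under any competing contract, so the broker can reproduce any competitor's induced profit while extracting the full surplus down to the reservation values. A minor technical point worth verifying is that $\xi^{i,*}$ has the required functional form $F^i(X^i,P)$ with the relevant integrability, which is immediate since $\nu^i=\dot X^i\in L^2$ and $\Pi^i$ is a continuous functional of $(X^i,P)$.
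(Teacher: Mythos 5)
Your proposal is correct and follows essentially the same route as the paper: the indifference of each client under $\xi^{i,*}$ (his objective collapses to the constant $R^{i,\theta}$), the resulting inclusion $\cE(\theta,\xi)\subset\cE(\theta,\xi^*)$ for any admissible $\xi$, and the chain of inequalities bounding $\EE\sum_j\xi^j$ by $\sum_j(\EE\Pi^j(\bar\nu)-R^{j,\theta})$ via the reservation constraint. Your additional checks (that $\xi^*\in\Sigma(\theta)$ and that $\xi^{i,*}$ has the required functional form) are sensible refinements the paper leaves implicit, but they do not change the argument.
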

\begin{proof}
It is easy to see that, with the fees \eqref{fee_optimal}, the objective of any agent $i\in \cN$, given by \eqref{eq.sec2.clientObj.def}, is constant and is equal to his reservation $R^{i,\theta}$. In particular, every agent $i\in \cN$ is indifferent in which action to choose. Using this observation, let us show that, for any set of admissible fees $(\xi^{j})_{j\in\cN}\in \Sigma(\theta)$, we have the inclusion: 
    \begin{equation}
        \cE(\theta,\xi) \subset \cE(\theta,\xi^*),
        \label{inclusion1}
    \end{equation}
where $\xi^*$ is given by \eqref{fee_optimal}.
Indeed, for any $(\tilde{\nu}^1,\cdots,\tilde{\nu}^N)\in \cE(\theta,\xi)$ and $i\in \cN$, we have: 
\begin{align*}
&J^{i,\theta}(\tilde{\nu}^i,\tilde{\nu}^{-i},\xi^{i,*}) = R^{i,\theta} = \sup_{\nu^i\in \mathcal{U}} J^{i,\theta}(\nu^i,\tilde{\nu}^{-i},\xi^{i,*}),\\
&J^{i,\theta}(\tilde{\nu}^i,\tilde{\nu}^{-i}) = \sup_{\nu^i\in \mathcal{U}} J^{i,\theta}(\nu^i,\tilde{\nu}^{-i}),
\end{align*}
which implies $(\tilde{\nu}^1,\cdots,\tilde{\nu}^N)\in \cE(\theta,\xi^*)$.
Using \eqref{inclusion1} and the admissibility constraint in \eqref{eq.sec2.Sigma.def}, we deduce: 
\begin{equation}\label{optimality}
\begin{split}
          V^\theta_P &=\sup_{\xi \in \Sigma(\theta)} \sup_{\nu\in \cE(\theta,\xi)}\EE \sum_{j\in\cN(\theta)} \xi^{j}
          =\sup_{\xi \in \Sigma(\theta)} \sup_{\nu\in \cE(\theta,\xi)} \left[
          \EE \sum_{j\in\cN} \left(X_T^{j}(P_T-\frac{\lambda_j}{2}X^j_T) -\int_0^T \nu^{j}_t P^\theta_t dt  \right)\right.\\
          &\left.- \EE \sum_{j\in\cN} \left( X_T^{j}(P_T-\frac{\lambda_j}{2}X^j_T) - \int_0^T \nu^{j}_t P^\theta_t dt - \xi^j \right)
          \right]\\
          &\leq \sup_{\xi \in \Sigma(\theta)} \sup_{\nu\in \cE(\theta,\xi)}
          \EE \sum_{j\in\cN} \left(X_T^{j}(P_T-\frac{\lambda_j}{2}X^j_T) -\int_0^T \nu^{j}_t P^\theta_t dt  \right)
          - \sum_{j\in\cN} R^{j,\theta}\\
          &\leq \sup_{\nu\in \cE(\theta,\xi^*)} \EE\left( \sum_{j\in\cN} X_T^{j}(P_T-\frac{\lambda_j}{2}X^j_T) -\int_0^T \nu^{j}_t P^\theta_t dt  \right)-\sum_{j\in\cN} R^{j,\theta}
          = \sup_{\nu\in \cE(\theta,\xi^*)} \EE\left( \sum_{j\in\cN} \xi^{j,*}  \right) = J^\theta_P(\xi^{*}).
\end{split}
\end{equation}
\qed
\end{proof}

\section{Equilibrium strategies of independent agents}
\label{se:equilibrium}

Proposition \ref{prop:multiagents.optContracts} shows that, for any given $\theta$, there exists a trivial choice of optimal contracts. This provides a solution to the broker's local problem \eqref{eq.multiagents.broker.Opt.givenTheta}. Nevertheless, to find the optimal choice of $\theta$ that solves the global problem \eqref{eq.multiagents.broker.Opt.global}, we need to compute the value function $V^\theta_P$ for each $\theta$. This, in turn, requires the knowledge of the equilibrium strategies of the agents that correspond to the fees $\xi^*$ constructed in Proposition \ref{prop:multiagents.optContracts}, as well as the value of the broker in this Stackelberg game. The former is discussed in this section, and the latter is analyzed in Section \ref{se:optimalOF}.

The results of this section hold for an arbitrary fixed $\theta\in\{0,1\}^N$. However, for convenience, we assume that $\cN(\theta) = \{m+1,\cdots, N\}$ with some $0\leq m\leq N-1$.

Notice that, with the fees given by \eqref{fee_optimal}, the objectives of the broker's clients do not depend on their actions nor on the actions of the independent agents. Hence, any equilibrium in the sub-game among the independent agents can trivially be extended to an equilibrium among all agents. This observation is made precise in Theorem \ref{thm:multiagents.main}, and it is only brought up here to explain why it suffices to focus on the equilibria among independent agents, which is the main subject of the remainder of this section.

We begin by noticing that the objective \eqref{eq.sec2.indepObj.def} of an independent agent, by design, is only affected by the actions of the broker and of her clients through the total order flow of the broker's clients, denoted
$$
u:= \sum_{i=m+1}^N \nu^i.
$$
We refer to $u$ as the broker's order flow.
In particular, for the fees $\xi^*$ constructed in Proposition \ref{prop:multiagents.optContracts}, the objective \eqref{eq.sec2.indepObj.def} of an independent agent $i\notin\cN$ can be rewritten as
\begin{align}
& J^{i,\theta}(\nu^i,\nu^{-i})
=\tilde{J}^{i,\theta}(\nu^i,\nu^{m,-i},u)\nonumber \\
&:= \EE\left[\int_0^T X^{i}_t\left(\mu + \sum_{j=1}^m  \lambda_j \nu^j_t + \lambda_0u_t\right)dt -\int_0^T \nu^i_t\left(\sum_{j=1}^m \kappa_j \nu^j_t+\kappa_0u_t\right)dt-\frac{\lambda_i}{2} (X^i_T)^2\right]\nonumber\\
&= \EE\left[\int_0^T X^{i}_t\left(\mu + \sum_{j\neq i,\,j\leq m}  \lambda_j \nu^j_t+\lambda_0u_t\right)dt -\int_0^T \nu^i_t\left(\sum_{j=1}^m \kappa_j \nu^j_t+\kappa_0u_t\right)dt-\frac{\lambda_i}{2} (x^i_0)^2\right],
\label{eq.multiagents.IndepAgents.obj}
\end{align}
where $\nu^{m,-i}$denotes the vector $(\nu^1,\ldots,\nu^m)$ without the $i$th element and $X^i_t=x^i_0 + \int_0^t \nu^i_s ds$.

The main goal of this section is to characterize all Nash equilibria among independent agents who solve 
\begin{equation}\label{eq.multiagents.IndepAgents.game}
\sup_{\tilde\nu^i\in\mathcal{U}}\tilde{J}^{i,\theta}(\tilde\nu^i,\tilde\nu^{-i},u),\quad i=1,\ldots,m,
\end{equation}
for any given order flow of the broker $u \in \mathcal{U}$.

\begin{proposition}\label{prop:Nash.equil}
   For any $u \in \mathcal{U}$, there exists a unique Nash equilibrium $(\tilde\nu^{1,*},\ldots,\tilde\nu^{m,*})$ of \eqref{eq.multiagents.IndepAgents.game}, and it is given by
    \begin{equation}\label{eq.multiagents.tildenu.def}
          \tilde\nu^{i,*}_t =\frac{1}{\kappa_i(m+1)}\left(mY^i_t-\sum_{j\neq i}^m Y^j_t -\kappa_0u_t\right), \quad 1\leq  i \leq m,
    \end{equation}
    where $(Y,Z)$ is the unique solution of the BSDE: 
    \begin{equation}\label{eq.multiagents.ode.direct}
        \begin{split}
                     dY^i_t &= -\left[\mu-\frac{\gamma-\lambda_i/\kappa_i}{m+1}Y^i_t+\sum_{j\neq i}^m\left(\frac{\lambda_j}{\kappa_j}-\frac{\gamma-\lambda_i/\kappa_i}{m+1}\right)Y^j_t+\left(\lambda_0-\frac{\gamma-\lambda_i/\kappa_i}{m+1}\kappa_0\right)u_t\right]dt+Z^i_tdB_t,  \\
            Y^i_T &= 0, \quad 1 \leq i \leq m.
        \end{split}
    \end{equation}
 and
\begin{equation*}
\gamma := \sum_{i=1}^m \frac{\lambda_i}{\kappa_i}.
\end{equation*}
\end{proposition}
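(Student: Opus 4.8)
The plan is to reduce the Nash equilibrium to the single linear BSDE \eqref{eq.multiagents.ode.direct} via first-order conditions, exploiting that each agent faces a strictly concave linear-quadratic control problem. First I would fix an agent $i\le m$ and an arbitrary profile $\tilde\nu^{-i}\in\mathcal{U}^{m-1}$ of the other independent agents, and read off from \eqref{eq.multiagents.IndepAgents.obj} that $\nu^i\mapsto\tilde J^{i,\theta}(\nu^i,\tilde\nu^{-i},u)$ is strictly concave: the inventory $X^i_t=x^i_0+\int_0^t\nu^i_s\,ds$ enters linearly, while the only second-order dependence on $\nu^i$ is the term $-\kappa_i\int_0^T(\nu^i_t)^2\,dt$ with $\kappa_i>0$. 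Hence for each fixed $\tilde\nu^{-i}$ the best response exists, is unique in $\mathcal{U}$, and is characterized by the vanishing of the G\^ateaux derivative in every admissible direction $\eta\in\mathcal{U}$.

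Computing that derivative, with $H_t=\int_0^t\eta_s\,ds$ and $A^i_t:=\mu+\sum_{j\ne i,\,j\le m}\lambda_j\nu^j_t+\lambda_0 u_t$, and applying Fubini to rewrite $\EE\int_0^T H_tA^i_t\,dt=\EE\int_0^T\eta_s\big(\int_s^T A^i_t\,dt\big)ds$, I obtain the variational identity
\begin{equation*}
\EE\int_0^T\eta_s\Big(\,\EE\big[\textstyle\int_s^T A^i_t\,dt\,\big|\,\FF_s\big]-2\kappa_i\nu^i_s-\sum_{j\ne i,\,j\le m}\kappa_j\nu^j_s-\kappa_0 u_s\Big)\,ds=0 .
\end{equation*}
Conditioning on $\FF_s$ (legitimate since $\eta_s,\nu^j_s,u_s$ are $\FF_s$-measurable) and using the arbitrariness of the adapted $\eta\in\mathcal{U}$, I conclude that the integrand vanishes $dt\times d\PP$-a.e. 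Setting $Y^i_s:=\EE[\int_s^T A^i_t\,dt\mid\FF_s]$, the martingale representation theorem gives a $Z^i$ for which $(Y^i,Z^i)$ solves $dY^i_t=-A^i_t\,dt+Z^i_t\,dB_t$ with $Y^i_T=0$, and the first-order condition reads $Y^i_s=2\kappa_i\nu^i_s+\sum_{j\ne i,\,j\le m}\kappa_j\nu^j_s+\kappa_0 u_s$.

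At a Nash equilibrium these $m$ conditions hold simultaneously, and I would solve the resulting linear system for $(\nu^i)$ in terms of $(Y^j)$: rewriting each relation as $\kappa_i\nu^i_s=Y^i_s-\sum_{j\le m}\kappa_j\nu^j_s-\kappa_0 u_s$ and summing over $i$ determines $\sum_{j\le m}\kappa_j\nu^j_s$, and back-substitution yields exactly \eqref{eq.multiagents.tildenu.def}. Plugging this expression for the $\nu^j$ into $A^i_t$ converts the driver $-A^i_t$ into the closed, $Y$-only driver of \eqref{eq.multiagents.ode.direct} (this is where the coefficients $\gamma-\lambda_i/\kappa_i$ and the factor $(m+1)^{-1}$ emerge, after collecting $\sum_{j\ne i,\,j\le m}\lambda_j/\kappa_j=\gamma-\lambda_i/\kappa_i$). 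Since \eqref{eq.multiagents.ode.direct} is a linear BSDE with constant coefficients and an $L^2$ source (recall $u\in\mathcal{U}$), standard theory yields a unique square-integrable solution $(Y,Z)$, so the candidate \eqref{eq.multiagents.tildenu.def} is admissible. Uniqueness follows because any equilibrium produces, through the steps above, a solution of this same BSDE; existence follows by reversing the argument, since from the BSDE solution the candidate \eqref{eq.multiagents.tildenu.def} satisfies every agent's first-order condition and hence, by concavity, is a best response.

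The main obstacle I anticipate is conceptual rather than computational: the process $Y^i$ is defined through the \emph{other} agents' controls, which in equilibrium are themselves functions of $(Y^j)$, so the best-response conditions form a coupled fixed-point system. The resolution is the algebraic inversion of the $m$ first-order conditions, which decouples the controls from the $Y$'s and lets the stochastic parts be absorbed into the $Z^i$ via martingale representation, thereby collapsing the fixed point into the single well-posed linear BSDE \eqref{eq.multiagents.ode.direct}. The remaining care is in the variational step itself — justifying the Fubini interchange, the conditioning on $\FF_s$, and the a.e. vanishing of the integrand from the arbitrariness of $\eta$ — which is routine but must be carried out in the $L^2$, $\FF^B$-adapted setting of $\mathcal{U}$.
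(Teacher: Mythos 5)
Your proposal is correct and follows essentially the same route as the paper: derive each agent's best-response first-order condition (the paper cites the stochastic maximum principle where you compute the G\^ateaux derivative directly, but the resulting condition $2\kappa_i\nu^i_s+\sum_{j\neq i}\kappa_j\nu^j_s+\kappa_0u_s=Y^i_s$ and the adjoint BSDE are identical), then sum over $i$ to invert the linear system, substitute back to obtain \eqref{eq.multiagents.tildenu.def}, and close the driver into the linear BSDE \eqref{eq.multiagents.ode.direct}, whose unique solvability yields uniqueness of the equilibrium. No gaps.
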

\begin{proof}
    Let us fix arbitrary $u\in \mathcal{U}$, $1\leq i \leq m$, $\tilde{\nu}^{-i}\in \mathcal{U}^{m-1}$, and describe an optimal strategy for the agent $i$. Recall that the agent $i$ maximizes the right hand side of \eqref{eq.multiagents.IndepAgents.obj}. We introduce his Hamiltonian:
    \begin{equation*}
    H_t^{i}\left(\tilde\nu^i,\tilde x^i, y^{i},\tilde\nu^{-i}\right)=\tilde \nu^{i} y^{i} + \tilde x^{i}\left(\mu+\lambda_0u_t+\sum_{j\neq i} \lambda_{j} \tilde\nu^{j}\right)-\kappa_{i}\left(\tilde\nu^{i}\right)^{2}-\tilde\nu^i\sum_{j\neq i} \kappa_j\tilde\nu^j-\kappa_0u_t\tilde\nu^i.
    \end{equation*}
    Next, we observe that $H_t^i$ is concave in $(\tilde\nu^i,\tilde x^i)$. Applying the stochastic maximum principle for the i-th agent's problem (see, e.g., Theorem 6.4.6 in \cite{Pham}), we conclude that the strategy defined by
\begin{equation}\label{first.equation.nash.equi.indep}
        \begin{split}
            \tilde\nu^i_t &=\frac{1}{2\kappa_i}\left(Y^i_t -\sum_{j\neq i}^m \kappa_j\tilde\nu^j_t-\kappa_0u_t\right),\\
            dY^i_t &=-\left(\mu + \sum_{j\neq i}^m \lambda_j\tilde\nu^j_t +\lambda_0u_t \right)dt+Z^i_tdB_t,
            \quad Y^i_T = 0,
        \end{split}
\end{equation}      
is optimal.

Moreover, as the objective of the agent $i$ is strictly concave, we conclude that \eqref{first.equation.nash.equi.indep} defines his unique optimal strategy, given $\tilde{\nu}^{-i}\in \mathcal{U}^{m-1}$.
Applying the same argument for every agent $1\leq i \leq m$, we deduce that any solution of the system \eqref{first.equation.nash.equi.indep}, for $i=1,\ldots,m$, defines a Nash equilibrium among the independent agents. By the strict concavity of the individual objectives we obtain that any Nash equilibrium is a solution to \eqref{first.equation.nash.equi.indep}.
Summing up the first equation in \eqref{first.equation.nash.equi.indep} over $i$, we obtain
    \begin{equation}\label{temp.impact.equiv.indep}
        \sum_{j=1}^m \kappa_j \tilde\nu^j_t =\frac{1}{m+1} \left(\sum_{j=1}^mY^j_t -m\kappa_0u_t\right),
    \end{equation}
    and, in turn,
\begin{align*}
&\tilde\nu^i_t =\frac{1}{\kappa_i}\left(Y^i_t -\sum_{j=1}^m \kappa_j\tilde\nu^j_t-\kappa_0u_t\right)
 =  \frac{1}{\kappa_i}\left(Y^i_t - \frac{1}{m+1} \left(\sum_{j=1}^mY^j_t -m\kappa_0u_t\right) -\kappa_0u_t\right),
\quad 1\leq i \leq m.\\
\end{align*}
Plugging the above in the second equation in \eqref{first.equation.nash.equi.indep}, we obtain \eqref{eq.multiagents.ode.direct}.
Thus, we have shown that any Nash equilibrium among the independent agents satisfies \eqref{eq.multiagents.tildenu.def}--\eqref{eq.multiagents.ode.direct}. It remains to notice that \eqref{eq.multiagents.ode.direct} is a standard linear BSDE, and its solution is unique. The latter, in particular, yields uniqueness of the solution to \eqref{first.equation.nash.equi.indep} and hence the uniqueness of equilibrium. 
\qed
\end{proof}

\smallskip

An immediate corollary of Proposition \ref{prop:Nash.equil} is that, with $\mathcal{N}(\theta)=\{m+1,\ldots,N\}$ and with the fees $\xi^*$ given by \eqref{fee_optimal}, the set $\cE(\theta,\xi^*)$ of all equilibria among the agents (see Definition \ref{def:cE}) is given by
\begin{align*}
& \cE(\theta,\xi^*) = \left\{ (\tilde\nu^{1,*}(u),\ldots,\tilde\nu^{m,*}(u),\nu^{m+1},\ldots,\nu^N):\,u=\sum_{i=m+1}^N \nu^i,\,\,\nu^{m+1},\ldots,\nu^N \in \mathcal{U}\right\},
\end{align*}
where $(\tilde\nu^{1,*}(u),\ldots,\tilde\nu^{m,*}(u))$ are given by \eqref{eq.multiagents.tildenu.def}.

\section{Optimization problem of the broker}
\label{se:optimalOF}

As in the previous section, the results of this section hold for an arbitrary fixed $\theta\in\{0,1\}^N$, but, for convenience, we assume that $\cN(\theta) = \{m+1,\cdots, N\}$ with some $0\leq m\leq N-1$.

\smallskip

Herein, we turn to the control problem of the broker. Notice that, with the fees given by \eqref{fee_optimal} and with the strategies of broker's clients denoted by $(\nu^{m+1},\ldots,\nu^N)$, the independent agents will necessarily adapt the strategies $(\tilde\nu^{1,*}(u),\ldots,\tilde\nu^{m,*}(u))$, given by \eqref{eq.multiagents.tildenu.def} with $u=\sum_{i=m+1}^N \nu^i$, and the payoff of the broker can be written as
\begin{align}
&\tilde{J}^\theta_P(u,X^{m+1}_T,\ldots,X^N_T) = \EE\left[\int_0^T \left(X^{0}_t+\sum_{i=m+1}^N x^i_0\right) \left(\mu + \sum_{i=1}^m  \lambda_i\tilde\nu^{i,*}_t+\lambda_0u_t\right)\,dt\label{eq.multiagents.BrokerObj.tildeJ}\right.\\ 
&\left.-\int_0^Tu_t \left(\sum_{i=1}^m \kappa_i\tilde\nu^{i,*}_t+\kappa_0u_t\right)\,dt
- \sum_{i=m+1}^N\frac{\lambda_i}{2} (X^{i}_T)^2
- \sum_{i=m+1}^N R^{i,\theta}\right],\nonumber
\end{align}
where 
\begin{align*}
& X^0_t:=\int_0^t u_tdt,\quad X^i_T = x_0^i + \int_0^T\nu^i_t dt.
\end{align*}
This implies that the value of the broker's objective \eqref{eq.JP.def}, for the fees given by \eqref{fee_optimal}, can be written as
\begin{equation}\label{eq.JP2.def}
J^\theta_P(\xi^*) = \sup_{u\in\mathcal{U}}
\sup_{\substack{X^{m+1}_T,\ldots,X^N_T\in \mathcal{G},\\
\int_0^Tu_tdt = \sum_{i=m+1}^N (X^i_T-x^i_0)}} \tilde{J}^\theta_P(u,X^{m+1}_T,\ldots,X^{N}_T),
\end{equation}
where 
$$ \mathcal{G}:=\{ X\in \mathcal{F}_T, s.t.\text{ } X = \int_0^Tu_tdt, \text{ for some } u \in \mathcal{U}\}. $$

Next, we use \eqref{eq.multiagents.tildenu.def} to deduce
\begin{align*}
& \sum_{i=1}^m \kappa_i \tilde\nu^{i,*}_t = \frac{m}{m+1} \sum_{i=1}^m Y^i_t - \frac{1}{m+1} \sum_{i=1}^m\left(\sum_{j=1}^m Y^j_t - Y^i_t\right) - \frac{m}{m+1} \kappa_0 u_t
=\frac{1}{m+1} \sum_{i=1}^m Y^i_t - \frac{m}{m+1} \kappa_0 u_t,\\
& \sum_{i=1}^m \lambda_i \tilde\nu^{i,*}_t = \frac{m}{m+1} \sum_{i=1}^m \frac{\lambda_i}{\kappa_i} Y^i_t
- \frac{1}{m+1} \sum_{i=1}^m \frac{\lambda_i}{\kappa_i}\left(\sum_{j=1}^m Y^j_t - Y^i_t \right)
- \frac{\gamma}{m+1}\kappa_0 u_t\\
& = \sum_{i=1}^m \frac{\lambda_i}{\kappa_i} Y^i_t
- \frac{\gamma}{m+1} \sum_{i=1}^m Y^i_t
- \frac{\gamma}{m+1}\kappa_0 u_t
= \sum_{i=1}^m \left(\frac{\lambda_i}{\kappa_i} - \frac{\gamma}{m+1}\right) Y^i_t - \frac{\gamma}{m+1}\kappa_0 u_t,
\end{align*}
which leads to
\begin{align*}
& \tilde J^\theta_P(u,X^{m+1}_T,\ldots,X^{N}_T) = \EE \left[\int_0^T \left(X^{0}_t+\sum_{i=m+1}^N x^i_0\right) \left(\mu +\sum_{i=1}^m\left[\frac{\lambda_i}{\kappa_i}-\frac{\gamma}{m+1}\right]Y^i_t + \left[\lambda_0 - \frac{\gamma\kappa_0}{m+1}\right] u_t\right)dt\right.\\
&\left.-\frac{1}{m+1}\int_0^Tu_t\left(\sum_{i=1}^m Y^i_t+\kappa_0 u_t\right)dt
- \sum_{i=m+1}^N\frac{\lambda_i}{2} (X^{i}_T)^2 - \sum_{i=m+1}^N R^{i,\theta} \right].
\end{align*}
where 
$(Y,Z)$ is the unique solution to the linear BSDE \eqref{eq.multiagents.ode.direct}.

\medskip

Let us resolve the optimization over $X^{m+1}_T,\ldots,X^{N}_T$ in \eqref{eq.JP2.def}, for each fixed $u$ and $\omega$. Indeed, the latter amounts to solving the quadratic minimization problem with linear constraints:
\begin{align*}
& \inf_{X^{m+1}_T,\ldots,X^N_T\in\RR} \sum_{i=m+1}^N\frac{\lambda_i}{2} (X^{i}_T)^2,\\
&\text{s.t. }\sum_{i=m+1}^N (X^i_T-x^i_0) = X^0_T,
\end{align*}
where we recall that $X^0_T$ is known given $u$. Constructing the Lagrangian and setting its derivatives to zero, we deduce that the above infimum equals
$$
\frac{1}{2\sum_{i=m+1}^N 1/\lambda_i}\left( X^0_T + \sum_{i=m+1}^N x^i_0\right)^2.
$$
Thus, the broker's objective for a fixed choice of clients $\theta$ can be written as
\begin{equation}\label{eq.JP3.def}
J^\theta_P(\xi^*) = \sup_{u\in\mathcal{U}} \hat{J}^\theta_P(u),
\end{equation}
where
\begin{align}
& \hat J^\theta_P(u) := \EE \left[\int_0^T \left(X^{0}_t+\sum_{i=m+1}^N x^i_0\right) \left(\mu +\sum_{i=1}^m\left[\frac{\lambda_i}{\kappa_i}-\frac{\gamma}{m+1}\right]Y^i_t + \left[\lambda_0 - \frac{\gamma\kappa_0}{m+1}\right] u_t\right)dt\right.\nonumber\\
&\left.-\frac{1}{m+1}\int_0^Tu_t\left(\sum_{i=1}^m Y^i_t+\kappa_0 u_t\right)dt
- \frac{1}{2\sum_{i=m+1}^N 1/\lambda_i}\left( X^0_T + \sum_{i=m+1}^N x^i_0\right)^2 - \sum_{i=m+1}^N R^{i,\theta} \right].
\label{eq.Sec5.hatJ.def}
\end{align}

\medskip

Let us denote $x_0:=\sum_{i=m+1}^N x^i_0$. 
The above expression can be viewed as a backward representation of $\hat{J}^\theta_P(u)$ as it involves $Y$ that solved a BSDE. 
The following lemma establishes a convenient forward representation for $\hat{J}^\theta_P(u)$, which is used in the subsequent analysis.

\begin{lemma}\label{le:1}
For any $u\in\mathcal{U}$, we have 
\begin{equation}
\label{obj_proof_forward}
    \begin{split}
         &\hat{J}^\theta_P(u) = \EE\left[ \int_0^T X^0_t\left(\frac{1}{m+1}\mu + \frac{2}{m+1}\tilde{\gamma}^\top C_t \right)dt
         +\frac{2}{m+1}\int_0^T X^0_t \tilde{\gamma}^\top e^{At} D_t\,dt
         -\frac{2}{m+1} E_T\,D_T \right. \\
         &\left.
         +x_0 \int_0^T \tilde{\gamma}^\top C_t\,dt
         -x_0\tilde{\gamma}^\top \int_0^T e^{At} \,dt\,D_T
         +x_0\tilde{\gamma}^\top \int_0^T e^{At} D_t\, dt\right. \\
         &\left.+\frac{1}{2}\left(\frac{\lambda_0}{m+1}-\frac{2\gamma \kappa_0}{(m+1)^2}-\frac{1}{\sum_{i=m+1}^N 1/\lambda_i}\right)(X_T^0)^2
         +x_0\left(\lambda_0-\frac{\gamma \kappa_0}{m+1}-\frac{1}{\sum_{i=m+1}^N 1/\lambda_i}\right)X_T^0\right.\\
         &\left. 
         -\frac{\kappa_0}{m+1}\int_0^T(u_t)^2dt\right]
         +\left(\mu Tx_0-\frac{1}{2\sum_{i=m+1}^N 1/\lambda_i} x_0^2\right) - \sum_{i=m+1}^N R^{i,\theta},
    \end{split}
\end{equation}
where $A\in \mathbb{R}^{m\times m}$, $ \mathbf{b}, \tilde{\gamma}\in \mathbb{R}^m$ are defined by 
\begin{equation}\label{notation.matrix}
    \begin{split}
         A_{ij} := \left\{\begin{array}{cc}
    \frac{1}{m+1}(\gamma-\lambda_i/\kappa_i),  & i=j, \\
      -\lambda_j/\kappa_j+\frac{1}{m+1}(\gamma-\lambda_i/\kappa_i), & i\neq j, \\
    \end{array}\right.
    \end{split}
\end{equation}
 \begin{equation}
    \begin{split}\label{param.state}
     \mathbf{b}&:=\left(\frac{\left(\gamma-\lambda_1/\kappa_1\right)\kappa_0}{m+1}-\lambda_0,\ldots,\frac{\left(\gamma-\lambda_m/\kappa_m\right)\kappa_0}{m+1}-\lambda_0\right)^\top,\\
     \tilde{\gamma}&:=\left(\frac{\lambda_1}{\kappa_1}-\frac{\gamma}{m+1},\ldots,\frac{\lambda_m}{\kappa_m}-\frac{\gamma}{m+1}\right)^\top, \\
     \end{split}
 \end{equation}
     and
 \begin{equation}\label{extra.state}
        \begin{split}
           C_t:= \mu\,e^{At}\int_t^Te^{-As}\mathbf{1}ds,\quad
          \quad  D_t:= \int_0^t u_s\,e^{-As}\mathbf{b}\,ds, \quad E_t:= \int_0^t X^0_s \,\tilde{\gamma}^\top e^{As}\,ds.
        \end{split}
    \end{equation}
\end{lemma}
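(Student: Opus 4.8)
The plan is to reduce the whole statement to the explicit solution of the linear BSDE \eqref{eq.multiagents.ode.direct} and its substitution into \eqref{eq.Sec5.hatJ.def}. First I would write \eqref{eq.multiagents.ode.direct} in vector form: setting $Y_t=(Y^1_t,\dots,Y^m_t)^\top$ and matching coefficients with \eqref{notation.matrix}--\eqref{param.state} shows that it is the linear BSDE
$$dY_t=\bigl(AY_t-\mu\mathbf 1+\mathbf b\,u_t\bigr)\,dt+Z_t\,dB_t,\qquad Y_T=0,$$
where $\mathbf 1=(1,\dots,1)^\top$. Since the bracket in \eqref{eq.Sec5.hatJ.def} equals $\mu+\tilde\gamma^\top Y_t+(\lambda_0-\gamma\kappa_0/(m+1))u_t$ and $\sum_{i=1}^m Y^i_t=\mathbf 1^\top Y_t$, the functional $\hat J^\theta_P(u)$ becomes affine in $Y_t$ and quadratic in $u_t$, so it remains to express $Y_t$ through $u$.

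Next I would solve the BSDE with the integrating factor $e^{-At}$. As $A$ and $e^{-At}$ commute, $d(e^{-At}Y_t)=e^{-At}(-\mu\mathbf 1+\mathbf b\,u_t)\,dt+e^{-At}Z_t\,dB_t$; integrating from $t$ to $T$, using $Y_T=0$, and taking $\EE[\,\cdot\mid\mathcal F_t]$ to remove the stochastic integral gives
$$Y_t=C_t-e^{At}\,\EE\!\left[D_T\mid\mathcal F_t\right]+e^{At}D_t,$$
with $C_t,D_t$ exactly as in \eqref{extra.state}; here I use $\int_t^T e^{-As}\mathbf b\,u_s\,ds=D_T-D_t$ and the $\mathcal F_t$-measurability of $D_t$.

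The decisive step is the treatment of the term $-\tfrac1{m+1}\int_0^T u_t\,\mathbf 1^\top Y_t\,dt$ coming from the quadratic part of \eqref{eq.Sec5.hatJ.def}. Since $u_t\,dt=dX^0_t$, I would integrate by parts: as $X^0_0=0$ and $Y_T=0$ the boundary terms vanish, whence $\int_0^T\mathbf 1^\top Y_t\,dX^0_t=-\int_0^T X^0_t\,\mathbf 1^\top dY_t$, and inserting the dynamics of $Y$ rewrites $\mathbf 1^\top A\,Y_t$ through the algebraic identity $\mathbf 1^\top A=-(m-1)\tilde\gamma^\top$, which follows directly from \eqref{notation.matrix} and \eqref{param.state}. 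Taking expectations (the $dB$ integral vanishes by the $L^2$ bounds on $u$ and $Z$) and combining with the linear term $\int_0^T(X^0_t+x_0)\tilde\gamma^\top Y_t\,dt$ of \eqref{eq.Sec5.hatJ.def}, the coefficient of $\int_0^T X^0_t\,\tilde\gamma^\top Y_t\,dt$ becomes $1-\tfrac{m-1}{m+1}=\tfrac2{m+1}$ and that of $\mu\int_0^T X^0_t\,dt$ becomes $1-\tfrac{m}{m+1}=\tfrac1{m+1}$; this is precisely where the weights $\tfrac1{m+1}$ and $\tfrac2{m+1}$ in \eqref{obj_proof_forward} arise. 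The same integration by parts, together with $\int_0^T X^0_t u_t\,dt=\tfrac12(X^0_T)^2$ and the scalar $\mathbf 1^\top\mathbf b$, feeds the $(X^0_T)^2$-coefficient.

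Finally I would substitute $Y_t=C_t-e^{At}\EE[D_T\mid\mathcal F_t]+e^{At}D_t$ into the resulting expression. The $C_t$ and $D_t$ pieces produce directly the $\tilde\gamma^\top C_t$ and $\tilde\gamma^\top e^{At}D_t$ terms, while the conditional-expectation piece is handled by the tower property: since $X^0_t$, $e^{At}$ and $\tilde\gamma^\top$ are $\mathcal F_t$-measurable or deterministic, $\EE[X^0_t\tilde\gamma^\top e^{At}\EE[D_T\mid\mathcal F_t]]=\EE[X^0_t\tilde\gamma^\top e^{At}D_T]$, and Fubini collapses $\int_0^T X^0_t\tilde\gamma^\top e^{At}\,dt$ into $E_T$, yielding $-\tfrac2{m+1}E_T D_T$; the analogous computation for the $x_0$-weighted contribution gives $-x_0\tilde\gamma^\top\int_0^T e^{At}\,dt\,D_T$. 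Collecting the remaining $(X^0_T)^2$, $x_0X^0_T$, $\int_0^T u_t^2\,dt$ and the deterministic pieces (from $\mu x_0T$, the quadratic liquidation term and $\sum_{i=m+1}^N R^{i,\theta}$) reproduces \eqref{obj_proof_forward}. I expect the main difficulty to be organizational rather than conceptual: tracking the several places where the $1/(m+1)$ weights recombine, verifying $\mathbf 1^\top A=-(m-1)\tilde\gamma^\top$, and checking the exact cancellations in the $(X^0_T)^2$-coefficient, e.g. that $\tfrac12(\lambda_0-\gamma\kappa_0/(m+1))+\mathbf 1^\top\mathbf b/(2(m+1))$ collapses to $\tfrac12(\lambda_0/(m+1)-2\gamma\kappa_0/(m+1)^2)$.
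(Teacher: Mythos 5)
Your proposal is correct and follows essentially the same route as the paper's proof: integrate $-\tfrac{1}{m+1}\int_0^T u_t\sum_i Y^i_t\,dt$ by parts against the BSDE dynamics to produce the $\tfrac{1}{m+1}$ and $\tfrac{2}{m+1}$ weights, substitute the explicit solution $Y_t=C_t-e^{At}\EE[D_T\mid\mathcal F_t]+e^{At}D_t$ of the linear BSDE, and remove the conditional expectations via the tower property and Fubini to arrive at the $E_T D_T$ and $D_t$ terms. The only cosmetic difference is that you organize the drift bookkeeping in vector form through the (correct) identity $\mathbf 1^\top A=-(m-1)\tilde\gamma^\top$, where the paper carries out the same cancellation componentwise.
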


\begin{proof}
Integrating by parts and recalling \eqref{eq.multiagents.ode.direct}, we obtain:
\begin{align*}
& -\frac{1}{m+1} \EE \int_0^T u_t \sum_{i=1}^m Y^i_t \,dt
= -\frac{1}{m+1} \EE \int_0^T X^0_t \left[m \mu - \sum_{i=1}^m \frac{\gamma-\lambda_i/\kappa_i}{m+1}Y^i_t\right.\\
&\left. + (m-1)\sum_{j=1}^m \frac{\lambda_j}{\kappa_j} Y^j_t 
- \frac{m\gamma-\sum_{i=1}^m\lambda_i/\kappa_i}{m+1}\sum_{j=1}^m Y^j_t
+ \sum_{i=1}^m \frac{\gamma-\lambda_i/\kappa_i}{m+1} Y^i_t 
+  \left(m \lambda_0 - \sum_{i=1}^m \frac{\gamma-\lambda_i/\kappa_i}{m+1}\kappa_0\right)u_t\right]dt\\
&= -\frac{1}{m+1} \EE \int_0^T X^0_t \left[m \mu 
+ (m-1)\sum_{j=1}^m \frac{\lambda_j}{\kappa_j} Y^j_t 
- \frac{m\gamma-\gamma}{m+1}\sum_{j=1}^m Y^j_t
+  \left(m \lambda_0 - \frac{m\gamma-\gamma}{m+1}\kappa_0\right)u_t\right]dt.
\end{align*}
Plugging the above into the right hand side of \eqref{eq.Sec5.hatJ.def}, we obtain:
\begin{align}
&\hat{J}^\theta_P(u) =\EE \left[\int_0^TX^0_t\left(\frac{1}{m+1}\mu + \frac{2}{m+1}\sum_{i=1}^m\left[\frac{\lambda_i}{\kappa_i}-\frac{\gamma}{m+1}\right]Y^i_t\right)dt\right. \nonumber\\
&\left.+\frac{1}{2}\left(\frac{\lambda_0}{m+1}-\frac{2\gamma \kappa_0}{(m+1)^2}-\frac{1}{\sum_{i=m+1}^N1/\lambda_i}\right)(X_T^0)^2+x_0\left(\lambda_0-\frac{\gamma \kappa_0}{m+1}-\frac{1}{\sum_{i=m+1}^N 1/\lambda_i}\right)X_T^0 \right. 
\nonumber\\
&\left.+x_0\sum_{i=1}^m\int_0^T\left(\frac{\lambda_i}{\kappa_i}-\frac{\gamma}{m+1}\right)Y^i_tdt-\frac{\kappa_0}{m+1}\int_0^T(u_t)^2dt\right]+\left(\mu Tx_0-\frac{1}{2\sum_{i=m+1}^N 1/\lambda_i}x_0^2\right)- \sum_{i=m+1}^N R^{i,\theta}.     \label{final.obj.broker}
\end{align}

\smallskip

Next, we recall that the linear BSDE \eqref{eq.multiagents.ode.direct} has a semi-explicit solution: 
\begin{equation}\label{exp.sol.FBSDE}
    Y_t=\mu e^{At}\int_t^Te^{-As}\mathbf{1}ds-\EE\left(e^{At}\int_t^Tu_se^{-As}\mathbf{b}ds  \left|\right. \mathcal{F}_t\right).
\end{equation} 
Plugging the above expression into \eqref{final.obj.broker} and recalling the definition of $C_t$ in \eqref{extra.state}, we obtain
\begin{align*}
         &\hat{J}^\theta_P(u) = \EE\left[ \int_0^T X^0_t\left(\frac{1}{m+1}\mu +
          \frac{2}{m+1}\tilde{\gamma}^\top C_t-\frac{2}{m+1}\tilde{\gamma}^\top \EE\left(e^{At}\int_t^T u_se^{-As}\mathbf{b}\,ds  \vert \mathcal{F}_t\right) \right)dt \right. \\
         &\left.+x_0\tilde{\gamma}^\top \int_0^T C_t\,dt
         -x_0\tilde{\gamma}^\top \int_0^T\EE\left(e^{At}\int_t^Tu_se^{-As}\mathbf{b}\,ds  \left|\right. \mathcal{F}_t\right)dt\right.\\
        &\left.+\frac{1}{2}\left(\frac{\lambda_0}{m+1}-\frac{2\gamma \kappa_0}{(m+1)^2}-\frac{1}{\sum_{i=m+1}^N 1/\lambda_i}\right)(X_T^0)^2
        +x_0\left(\lambda_0-\frac{\gamma \kappa_0}{m+1} - \frac{1}{\sum_{i=m+1}^N 1/\lambda_i}\right)X_T^0\right.\\
         &\left. -\frac{\kappa_0}{m+1}\int_0^T(u_t)^2dt\right]
     +\left(\mu Tx_0-\frac{1}{2\sum_{i=m+1}^N 1/\lambda_i}x_0^2\right) - \sum_{i=m+1}^N R^{i,\theta}.
\end{align*}
Using Fubini's theorem and the tower property, we remove the conditional expectations in the right hand side of the above.
Finally, noticing that 
\begin{align*}
\int_t^T u_se^{-As}\mathbf{b}\,ds = D_T-D_t
\end{align*}
and recalling the definition of $E_T$ (in \eqref{extra.state}), we obtain the statement of the lemma.
\qed
\end{proof}

\medskip

Next, we introduce our main assumption.
\begin{ass}\label{ass:main}
The parameters $\{\lambda_i,\kappa_i\}$ are such that
\begin{equation}
2\|\tilde{\gamma}\|\, \|\mathbf{b}\|\,e^{\|A\|_{2}T}\,T^2 + \lambda_0 T/2 < \kappa_0,
\end{equation}
where $\|\cdot\|_2$ is the 2-norm of a matrix, $\|.\|$ denotes the Euclidean norm in $\mathbb{R}^m$, and we recall
\begin{align*}
&\tilde{\gamma}=\left(\frac{\lambda_1}{\kappa_1}-\frac{\gamma}{m+1},\ldots,\frac{\lambda_m}{\kappa_m}-\frac{\gamma}{m+1}\right)^\top,      \quad \mathbf{b}=\left(\frac{\left(\gamma-\lambda_1/\kappa_1\right)\kappa_0}{m+1}-\lambda_0,\ldots,\frac{\left(\gamma-\lambda_m/\kappa_m\right)\kappa_0}{m+1}-\lambda_0\right)^\top,\\
&\gamma = \sum_{i=1}^m \frac{\lambda_i}{\kappa_i},
\quad A_{ij} := \left\{\begin{array}{cc}
    \frac{1}{m+1}(\gamma-\lambda_i/\kappa_i),  & i=j, \\
      -\lambda_j/\kappa_j+\frac{1}{m+1}(\gamma-\lambda_i/\kappa_i), & i\neq j, \\
    \end{array}\right.
\end{align*}
\end{ass}

When verifying the above assumption, it is convenient to recall that $\|A\|_2\leq \|A\|_{\text{Frob}}$, where
\begin{align*}
\|A\|_{\text{Frob}}:=\frac{1}{m+1}\left(\sum_{i=1}^m\left(\gamma-\frac{\lambda_i}{\kappa_i}\right)^2+ \sum_{i=1}^m\sum_{j\neq i}^m \left(\frac{\lambda_j}{\kappa_j}(m+1)-\gamma+\frac{\lambda_i}{\kappa_i}\right)^2\right)^{1/2}.
\end{align*}

\smallskip

The next proposition shows that the broker's optimization problem \eqref{eq.JP3.def} is well-posed under Assumption \ref{ass:main}.

\begin{proposition}\label{prop.unique.sol}
Under Assumption \ref{ass:main}, there exists a unique maximizer $u^*$ of $\hat{J}^\theta_P(\cdot)$ over $\mathcal{U}$. Moreover, the optimal strategy $u^*$ is deterministic. 
\end{proposition}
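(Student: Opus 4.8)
The plan is to regard $\hat J^\theta_P$ as a quadratic functional on the Hilbert space $\mathcal U = L^2([0,T]\times\Omega)$. Reading off the forward representation \eqref{obj_proof_forward} and recalling that $X^0_t=\int_0^t u_s\,ds$, $D_t$ and $E_t$ are all \emph{linear} in $u$ while $C_t$ and the kernels $A,\mathbf b,\tilde\gamma$ are deterministic and $u$-independent, one sees that $\hat J^\theta_P(u)=Q(u)+L(u)+\mathrm{const}$, where $Q$ is a bounded quadratic form and $L$ a bounded linear functional. The proof then splits into two essentially independent pieces: (i) a coercivity/strict-concavity estimate $Q(u)\le -c\,\|u\|_{L^2}^2$ yielding existence and uniqueness of the maximizer, and (ii) a Jensen-type symmetrization argument forcing that maximizer to be deterministic.

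For (i), I would single out the term $-\tfrac{\kappa_0}{m+1}\int_0^T u_t^2\,dt=-\tfrac{\kappa_0}{m+1}\|u\|_{L^2}^2$ as the only negative-definite contribution and bound every remaining quadratic term against $\|u\|_{L^2}^2$. The two cross terms $\tfrac{2}{m+1}\int_0^T X^0_t\,\tilde\gamma^\top e^{At}D_t\,dt$ and $-\tfrac{2}{m+1}E_T D_T$ should be combined by integration by parts: since $\tfrac{d}{dt}(E_tD_t)=X^0_t\tilde\gamma^\top e^{At}D_t+E_t u_t e^{-At}\mathbf b$ and $E_0=D_0=0$, their sum collapses to $-\tfrac{2}{m+1}\int_0^T E_t\,u_t\,e^{-At}\mathbf b\,dt$. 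Using $|X^0_t|\le\sqrt t\,\|u\|_{L^2}$, the operator bound $\|e^{A\cdot}\|_2\le e^{\|A\|_2 T}$ on $[-T,T]$, the definitions \eqref{extra.state}, and Cauchy--Schwarz, this combined term is $O\big(\|\tilde\gamma\|\,\|\mathbf b\|\,e^{\|A\|_2T}T^2\,\|u\|_{L^2}^2\big)$; meanwhile the $(X^0_T)^2$ term is controlled by its positive part $\le \tfrac{\lambda_0 T}{2(m+1)}\|u\|_{L^2}^2$ (the other two coefficients there are nonpositive and only help). Collecting constants and invoking Assumption \ref{ass:main} gives $Q(u)\le-c\|u\|_{L^2}^2$ with $c=\tfrac{1}{m+1}\big(\kappa_0-2\|\tilde\gamma\|\|\mathbf b\|e^{\|A\|_2T}T^2-\lambda_0T/2\big)>0$. \textbf{This estimate is the crux of the proof:} the delicate point is to keep a single factor $e^{\|A\|_2T}$ and to exploit the cancellation in the cross terms, so that the resulting bound fits inside the budget furnished by Assumption \ref{ass:main}.

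With the estimate in hand, existence and uniqueness are routine: $\hat J^\theta_P$ is continuous, coercive (since $\hat J^\theta_P(u)\le -c\|u\|_{L^2}^2+C\|u\|_{L^2}+\mathrm{const}\to-\infty$), and strictly concave (its associated self-adjoint operator is $\le-2cI<0$). A bounded maximizing sequence therefore converges weakly to a maximizer by weak upper semicontinuity of concave continuous functionals, and strict concavity makes it unique; equivalently, the unique critical point $u^*=\mathcal A^{-1}b$ of the invertible operator $\mathcal A\ge 2cI$ is the global maximizer.

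Finally, for determinism, let $\bar u_t:=\EE[u^*_t]$, which lies in $\mathcal U$ by Jensen, and write $u^*=\bar u+\delta$ with $\EE[\delta_t]=0$ for a.e.\ $t$. Because all kernels in \eqref{obj_proof_forward} are deterministic, the linear term satisfies $L(\delta)=0$, and the symmetric bilinear form pairing $\bar u$ with $\delta$ vanishes term by term: in each product one factor is either a deterministic functional of $\bar u$ (pull it out and use $\EE[\delta_t]=0$) or a mean-zero functional of $\delta$ (e.g.\ $\EE[X^0_s[\delta]]=0$, whence $\EE[E_t[\delta]]=0$) paired against a deterministic factor. Hence $\hat J^\theta_P(u^*)=\hat J^\theta_P(\bar u)+Q(\delta)$. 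Applying the estimate from (i) to $\delta$ gives $Q(\delta)\le-c\|\delta\|_{L^2}^2\le0$, where $\|\delta\|_{L^2}^2=\EE\!\int_0^T (u^*_t)^2\,dt-\int_0^T\bar u_t^2\,dt\ge0$ by Jensen, vanishing iff $u^*$ is deterministic. Thus $\hat J^\theta_P(\bar u)\ge \hat J^\theta_P(u^*)$; as $u^*$ is the unique maximizer, equality must hold, forcing $\delta=0$, so $u^*=\bar u$ is deterministic.
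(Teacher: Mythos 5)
Your proof is correct, and its analytic core coincides with the paper's: you isolate the same cancellation between $\frac{2}{m+1}\int_0^T X^0_t\tilde\gamma^\top e^{At}D_t\,dt$ and $-\frac{2}{m+1}E_TD_T$, bound the resulting single cross term by $\frac{2}{m+1}\|\tilde\gamma\|\,\|\mathbf b\|\,e^{\|A\|_2T}T^2\int_0^T u_t^2\,dt$ via the same Cauchy--Schwarz/Jensen chain (correctly keeping one factor $e^{\|A\|_2T}$ rather than $e^{2\|A\|_2T}$), and control the $(X^0_T)^2$ term by $\frac{\lambda_0 T}{2(m+1)}\int_0^T u_t^2\,dt$, so that Assumption \ref{ass:main} delivers exactly the same negative coefficient. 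Where you diverge is in the packaging, and both changes are improvements in rigor. First, you apply the estimate directly to the homogeneous quadratic part $Q$, obtaining $Q(u)\le -c\|u\|^2$ outright; the paper instead bounds the full $G^u$ (quadratic plus linear terms) and then invokes a scaling/contradiction step (``as $G^u$ is linear-quadratic \ldots'') to transfer strict concavity back to the functional itself --- your route skips that indirection. Second, and more substantively, you give an explicit argument for why the maximizer is deterministic: decomposing $u^*=\bar u+\delta$ with $\bar u=\EE[u^*]$, observing that all kernels are deterministic so the linear term and the cross pairing vanish by Fubini and $\EE[\delta_t]=0$, and concluding $\hat J^\theta_P(u^*)=\hat J^\theta_P(\bar u)+Q(\delta)\le\hat J^\theta_P(\bar u)-c\|\delta\|^2$, forcing $\delta=0$. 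The paper asserts determinism essentially as a consequence of strict concavity over $\mathcal U_d$ without spelling out this projection step, so your Jensen symmetrization fills a genuine gap in the published argument. No errors found.
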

\begin{proof}
Using Lemma \ref{le:1}, we deduce that $\hat{J}^\theta_P(u) = \EE G^{u}$, where
\begin{equation*}
        \begin{split}
        G^{u} &:=  \int_0^T X^0_t\left(\frac{1}{m+1}\mu+\frac{2}{m+1}\tilde{\gamma}^\top C_t \right)dt
         +\frac{2}{m+1}\int_0^T X^0_t \tilde{\gamma}^\top e^{At} D_t\,dt
         -\frac{2}{m+1} E_T\,D_T \\
         &
         +x_0 \int_0^T \tilde{\gamma}^\top C_t\,dt
         -x_0\tilde{\gamma}^\top \int_0^T e^{At} \,dt\,D_T
         +x_0\tilde{\gamma}^\top \int_0^T e^{At} D_t\, dt \\
         &+\frac{1}{2}\left(\frac{\lambda_0}{m+1}-\frac{2\gamma \kappa_0}{(m+1)^2}-\frac{1}{\sum_{i=m+1}^N1/\lambda_i}\right)(X_T^0)^2
         +x_0\left(\lambda_0-\frac{\gamma \kappa_0}{m+1}-\frac{1}{\sum_{i=m+1}^N1/\lambda_i}\right)X_T^0\\
         & 
         -\frac{\kappa_0}{m+1}\int_0^T(u_t)^2dt
         +\left(\mu Tx_0-\frac{1}{2\sum_{i=m+1}^N1/\lambda_i}x_0^2\right) - \sum_{i=m+1}^N R^{i,\theta}.
    \end{split}
    \end{equation*}
Next, we observe: 
\begin{equation}
    \begin{split}
        \frac{2}{m+1}\left|D_TE_T -\int_0^T X^0_t \tilde{\gamma}^\top e^{At} D_t\,dt \right| &\leq
        \frac{2}{m+1} \left(\int_0^T \left| X^0_t\tilde{\gamma} e^{At}\int_t^Te^{-As}\mathbf{b}\,u_s\,ds \right| dt \right) \\
        &\leq \frac{2}{m+1}\,\|\tilde{\gamma}\|\,\|\mathbf{b}\|\,e^{\|A\|_2T}\, \left(\int_0^T |X^0_t|dt \int_0^T |u_t|dt\right)\\ 
        &\leq\frac{2}{m+1}\,\|\tilde{\gamma}\|\,\|\mathbf{b}\|\,e^{\|A\|_2T}\,T\, \left(\int_0^T |u_t| dt \right)^2\\
        &\leq\frac{2}{m+1}\,\|\tilde{\gamma}\|\,\|\mathbf{b}\|\,e^{\|A\|_2T}\,T^2\, \int_0^T u_t^2 dt
    \end{split}
\end{equation}
where we used Jensen's inequality:
\begin{align*}
\left(\int_0^T |u_t| dt \right)^2\leq T\,\int_0^T u_t^2 dt.
\end{align*}
The above inequality also yields
\begin{equation*}
(X^0_T)^2 \leq T\, \int_0^T (u_t)^2dt.
\end{equation*}
Collecting the above, we conclude that 
\begin{equation*} 
    \begin{split}
   & G^{u} \leq \left(\frac{2}{m+1}\|\tilde{\gamma}\| \|\mathbf{b}\|e^{\|A\|_{2}T}T^2 - \frac{\kappa_0}{m+1} + \frac{\lambda_0 T}{2(m+1)}\right)\int_0^T(u_t)^2dt \\
    &+ x_0\tilde{\gamma}\int_0^TC_tdt - x_0\tilde{\gamma}\int_0^Te^{At}dt\,D_T+x_0\tilde{\gamma}\int_0^Te^{At}D_tdt \\
   &+ \int_0^T X^0_t\left(\frac{1}{m+1}\mu+\frac{2}{m+1}\tilde{\gamma}C_t\right)dt +x_0\left(\lambda_0-\frac{\gamma \kappa_0}{m+1}-\frac{1}{\sum_{i=m+1}^N 1/\lambda_i}\right)X_T^0+\tilde a, 
   \end{split}
\end{equation*}
where $\tilde a\in \RR$.
Notice that the second and the third lines in the right hand side of the above display are linear in $u$. In addition, Assumption \ref{ass:main} yields that the coefficient in front of $\int_0^T(u_t)^2 dt$ is strictly negative, which implies that the above expression is strictly concave as a function of $u\in\mathcal{U}_d$, where we introduced the set of deterministic strategies $\mathcal{U}_d:=L^2([0,T])$. As $G^u$ is linear-quadratic in $u$, we conclude that it is also strictly concave (this can deduced easily by contradiction), which yields the statement of the proposition.
\qed
\end{proof}


\medskip

The above lemma shows that there is no loss of optimality in reducing the optimization problem \eqref{eq.JP3.def} of the broker to the deterministic set of strategies $u\in\mathcal{U}_d = L^2([0,T])$. As the objective $\hat J_P^\theta$ is linear-quadratic, we can find its maximizer by setting to zero its derivative.

\begin{lemma}\label{le:frechet}
The mapping $u\mapsto\hat J_P^\theta(u)$ is Fréchet-differentiable w.r.t. the $L^2$-norm on $\mathcal{U}_d$, and its derivative is given by the following linear functional of $v\in \mathcal{U}_d$: 
\begin{align*}
    D \hat J^\theta_P(u)(v)&:= \int_0^Tv_t\left(\frac{\mu(T-t)}{m+1}+p_t-\frac{2}{m+1}\tilde{\gamma}q_t+x_0\left[\lambda_0-\frac{\gamma\kappa_0}{m+1}-\frac{1}{\sum_{j=1}^N 1/\lambda_j}\right] - x_0\tilde{\gamma}r_t-\frac{2\kappa_0}{m+1}u_t\right)dt\\
\label{derivative.def}
\end{align*}
where 
$(q,p,r)\in C([0,T],\mathbb{R}\times\mathbb{R}^m\times\mathbb{R}^m) $ is the unique solution of the (non-coupled) system of ODE:
\begin{equation}\label{eq.pqr.def}
    \begin{split}
        dp_t &= -\frac{2}{m+1}\tilde{\gamma}^{\top}Y_tdt,\quad p_T = \left(\frac{\lambda_0}{m+1}-\frac{\gamma \kappa_0}{(m+1)^2}-\frac{1}{\sum_{j=m+1}^N1/\lambda_j}\right)X^0_T, \\
        dq_t &= (-Aq_t+X^0_t\mathbf{b})dt,\quad q_0 = \mathbf{0}, \\
        dr_t &= (-Ar_t+\mathbf{b})dt,\quad r_0 = \mathbf{0},
     \end{split}
\end{equation}
where $\mathbf{b}$, $\tilde{\gamma}$ and $A$ are defined in \eqref{notation.matrix}--\eqref{param.state}.
\end{lemma}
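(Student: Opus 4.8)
The plan is to exploit the reduction, established in Proposition \ref{prop.unique.sol}, that it suffices to work on the space of deterministic controls $\mathcal{U}_d = L^2([0,T])$. On $\mathcal{U}_d$ the expectation in $\hat J^\theta_P(u)=\EE G^u$ is vacuous, so $\hat J^\theta_P$ coincides with the deterministic functional $G^u$ written in the proof of Proposition \ref{prop.unique.sol}. First I would record that the maps $u\mapsto X^0$, $u\mapsto D$ and $u\mapsto E$ are bounded linear operators on $L^2([0,T])$ (their kernels involve only $e^{At}$, $\mathbf{b}$, $\tilde\gamma$, all bounded on the compact $[0,T]$), while $C_t$ does not depend on $u$. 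Consequently every summand of $G^u$ is a constant, a bounded linear form, or a bounded quadratic form in $u$, so that $\hat J^\theta_P(u)=B(u,u)+L(u)+c$ with $B$ a bounded symmetric bilinear form and $L$ a bounded linear form on $L^2([0,T])$.

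Fréchet differentiability is then immediate from the Hilbert-space structure: for such a functional, $\hat J^\theta_P(u+v)-\hat J^\theta_P(u)=[2B(u,v)+L(v)]+B(v,v)$, where $v\mapsto 2B(u,v)+L(v)$ is bounded linear and $|B(v,v)|\le \|B\|\,\|v\|_{L^2}^2=o(\|v\|_{L^2})$. Hence the Gâteaux and Fréchet derivatives coincide, and it remains only to compute the Gâteaux derivative explicitly and match it to the claimed expression.

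To do so I would perturb $u\mapsto u+\epsilon v$ and retain the terms linear in $\epsilon$. Writing $\delta X^0_t=\int_0^t v_s\,ds$, $\delta D_t=\int_0^t v_s e^{-As}\mathbf{b}\,ds$, $\delta E_t=\int_0^t \delta X^0_s\,\tilde\gamma^\top e^{As}\,ds$ and $\delta C=0$, differentiating each summand of \eqref{obj_proof_forward} produces a sum of terms that are linear in $v$ but in which $v$ appears inside iterated integrals (through $\delta X^0$, $\delta D$, $\delta E$). The core of the proof is to rewrite all of these as a single integral $\int_0^T v_t(\cdot)\,dt$ against an explicit kernel, using repeated Fubini and integration by parts together with the identity $e^{At}e^{-As}=e^{A(t-s)}$ and the semi-explicit form \eqref{exp.sol.FBSDE} of $Y$. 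The auxiliary processes $(p,q,r)$ of \eqref{eq.pqr.def} are introduced precisely to package the resulting kernels: the forward ODEs for $q$ (driven by $X^0_t\mathbf{b}$) and $r$ (driven by $\mathbf{b}$), both started at $0$, collect the $X^0$-weighted and $x_0$-weighted contributions coming from $\delta D$, while the backward ODE for $p$, whose driver $-\frac{2}{m+1}\tilde\gamma^\top Y_t$ and terminal value encode the $C$-, $e^{At}D$- and $E_T$-terms, collects the remaining contributions. Existence, uniqueness and continuity of $(p,q,r)$ are standard, since \eqref{eq.pqr.def} is an uncoupled system of linear ODEs with continuous coefficients on $[0,T]$.

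I expect the main obstacle to be exactly this bookkeeping step. One must group the variational terms so that each group collapses — typically via a cancellation of the form $\int_s^T(\cdot)\,dr-\int_0^T(\cdot)\,dr=-\int_0^s(\cdot)\,dr$ — onto one of $p_t$, $\tilde\gamma^\top q_t$, $\tilde\gamma^\top r_t$, and one must track carefully the boundary terms generated by each integration by parts, since these are what fix the terminal condition of $p$ in \eqref{eq.pqr.def} (and, in particular, reconcile the coefficient of the $(X^0_T)^2$ term in \eqref{obj_proof_forward} with the value $p_T$). Once every group has been identified, collecting the kernels yields the stated linear functional $D\hat J^\theta_P(u)(v)$.
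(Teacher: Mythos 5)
Your proposal is correct and follows essentially the same route as the paper: expand $\hat J^\theta_P(u+v)-\hat J^\theta_P(u)$ for the linear--quadratic functional on $\mathcal{U}_d$, isolate the part linear in $v$, use the explicit exponential formula for $Y$ together with Fubini to collapse the iterated integrals onto a single kernel against $v_t$, and recognize that kernel as being built from the solutions $(p,q,r)$ of \eqref{eq.pqr.def}. The only (minor) differences are that you start from the forward representation of Lemma \ref{le:1} rather than the intermediate $Y$-based expression \eqref{final.obj.broker}, and that your bounded-bilinear-form argument makes explicit the step the paper dismisses as ``a standard exercise''; the bookkeeping you describe but do not execute is exactly what the paper carries out.
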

\begin{proof}
For any $u\in \mathcal{U}_d$ we define $X_t^{0,u}:=\int_0^t u_tdt$, and we introduce $Y^{u}$ defined as the unique solution of the ODE: 
\begin{equation}\label{eq.Y_2}
    \begin{split}
        dY_t &=( A Y_t + \mathbf{b}u_t-\mu \mathbf{1})dt,\\
        Y_T &= \mathbf{0},
    \end{split}
\end{equation}
where $\mathbf{1}=(1,\cdots,1)^T\in \mathbb{R}^m$. Recall \eqref{eq.Y_2} can be written as 
\begin{equation}
    Y_t = \mu e^{At}\int_t^Te^{-As}\mathbf{1}ds-e^{At}\int_t^Tu_se^{-As}\mathbf{b}ds.
\end{equation}

Let $u,v \in \mathcal{U}_d$. Then,
\begin{equation*}
    \begin{split}
    &\hat J^\theta_P(u+v)-\hat J^\theta_P(u) 
    = \frac{\mu}{m+1}\int_0^TX^{0,v}_sds +\frac{2}{m+1} \int_0^T X^{0,u}_s\left(\sum_{i=1}^m(\frac{\lambda_i}{\kappa_i}-\frac{1}{m+1}\gamma)(Y^{i,u+v}_s-Y_s^{i,u}) \right)ds\\
    &+\frac{2}{m+1}\int_0^T X^{0,v}_s\left(\sum_{i=1}^m(\frac{\lambda_i}{\kappa_i}-\frac{1}{m+1}\gamma) \left(Y^{i,u}_s-e^{As}\int_s^Te^{-Ar}\mathbf{b}v_rdr\right) \right)ds
    -\frac{2\kappa_0}{m+1}\int_0^Tu_sv_sds\\
    &\left(\frac{\lambda_0}{m+1}-\frac{2\gamma \kappa_0}{(m+1)^2}-\frac{1}{\sum_{i=m+1}^N\frac{1}{\lambda_i}}\right)X^{0,u}_TX^{0,v}_T +\int_0^T X^{0,v}_s\left(\sum_{i=1}^m(\frac{\lambda_i}{\kappa_i}-\frac{1}{m+1}\gamma))Y^{i,v}_s \right)ds-\frac{\kappa_0}{m+1}\int_0^T(v_s)^2ds \\
    &\frac{1}{2}\left(\frac{\lambda_0}{m+1}-\frac{2\gamma \kappa_0}{(m+1)^2}-\frac{1}{\sum_{i=m+1}^N\frac{1}{\lambda_i}}\right)(X^{0,v}_T)^2+x_0\sum_{i=1}^m\int_0^T\left(\frac{\lambda_i}{\kappa_i}-\frac{\gamma}{m+1}\right)(Y^{i,u+v}_t-Y^{i,u}_t)dt \\
    &+x_0\left(\lambda_0-\frac{\gamma \kappa_0}{m+1} -\frac{1}{\sum_{i=m+1}^N\frac{1}{\lambda_i}}\right)X_T^{0,v}.
    \end{split}
\end{equation*}
It is a standard exercise to check that the linear part (in $u$) of the right hand side of the above gives the desired Fréchet derivative:
\begin{equation}\label{frechet.derivative}
\begin{split}
&D \hat J^\theta_P(u)(v):= \frac{\mu}{m+1}\int_0^TX^{0,v}_sds +\frac{2}{m+1} \int_0^T X^{0,u}_s\left(\sum_{i=1}^m(\frac{\lambda_i}{\kappa_i}-\frac{1}{m+1}\gamma)(Y^{i,u+v}_s-Y_s^{i,u}) \right)ds\\
    &+\frac{2}{m+1}\int_0^T X^{0,v}_s\left(\sum_{i=1}^m(\frac{\lambda_i}{\kappa_i}-\frac{1}{m+1}\gamma))Y^{i,u}_s \right)ds
   +\left(\frac{\lambda_0}{m+1}-\frac{2\gamma \kappa_0}{(m+1)^2}-\frac{1}{\sum_{i=m+1}^N\frac{1}{\lambda_i}}\right)X^{0,u}_TX^{0,v}_T\\
    & -\frac{2\kappa_0}{m+1}\int_0^Tu_sv_sds+x_0\sum_{i=1}^m\int_0^T\left(\frac{\lambda_i}{\kappa_i}-\frac{\gamma}{m+1}\right)(Y^{i,u+v}_t-Y^{i,u}_t)dt +x_0\left(\lambda_0-\frac{\gamma \kappa_0}{m+1} -\frac{1}{\sum_{i=m+1}^N\frac{1}{\lambda_i}}\right)X_T^{0,v}.
    \end{split}
\end{equation}

\medskip

Next, we rewrite $DJ^\theta_P(u)$ in a more convenient way. To this end, we observe:
\begin{equation*}
    \begin{split}
    &\int_0^T X^{0,u}_t\sum_{i=1}^m(\frac{\lambda_i}{\kappa_i}-\frac{\gamma}{m+1})(Y^{i,u+v}_t-Y^{i,u}_t) dt= -\int_0^T X^{0,u}_t\tilde{\gamma}^T\int_t^\top e^{A(t-s)}\mathbf{b}v_sdsdt \\
    &= -\int_0^Tv_t\tilde{\gamma}^T \int_0^t e^{-A(t-s)}\mathbf{b} X^{0,u}_sdsdt, \\
    &\int_0^T X^{0,v}_t\left(\sum_{i=1}^m(\frac{\lambda_i}{\kappa_i}-\frac{\gamma}{m+1})Y^{i,u}_t \right)dt = \int_0^T \left(\sum_{i=1}^m(\frac{\lambda_i}{\kappa_i}-\frac{\gamma}{m+1})Y^{i,u}_t \right)\int_0^tv_sdsdt = \int_0^Tv_t \int_t^T\tilde{\gamma}^\top Y^u_sdsdt, \\
    &X^{0,u}_TX^{0,\eta}_T = \int_0^Tv_t(X^{0,u}_T)dt, \\
    &\int_0^T(Y_t^{u+v}-Y^u_t)dt = -\int_0^T e^{At}\int_t^Tv_se^{-As}\mathbf{b}dsdt = -\int_0^Tv_t\int_0^te^{-A(t-s)}\mathbf{b}dsdt.
    \end{split}
\end{equation*}
Using the above, we deduce:
\begin{equation}
    \begin{split}
        DJ^\theta_P(u)(v)&:=\int_0^Tv_t\left(\frac{\mu(T-t)}{m+1}+\frac{2}{m+1}\tilde{\gamma}^\top\int_t^TY^u_sds-\frac{2}{m+1}\tilde{\gamma}^Te^{-At}\int_0^te^{As}X_s^{0,u}\,\mathbf{b}\,ds\right.\\
&\left.+\left[\frac{\lambda_0}{m+1}-\frac{2\gamma \kappa_0}{(m+1)^2}-\frac{1}{\sum_{i=m+1}^N\frac{1}{\lambda_i}}\right]X_T^{0,u}+x_0\left[\lambda_0-\frac{\gamma\kappa_0}{m+1}-\frac{1}{\sum_{j=m+1}^N\frac{1}{\lambda_j}}\right]\right.\\
&\left.-x_0\tilde{\gamma}^\top e^{-At}\int_0^te^{As}\,\mathbf{b}\,ds-\frac{2\kappa_0}{m+1}u_t\right)dt
    \end{split}
\end{equation}
Recalling \eqref{eq.pqr.def}, we obtain the statement of the lemma.
\qed
\end{proof}

\medskip

The above lemma allows us to characterize the optimal order flow of the broker in terms of the unique solution of a linear forward-backward system of ODEs, arising as a combination of \eqref{eq.pqr.def} and
\begin{equation}\label{eq.multiagents.ustar.def}
        \begin{split}
        &dY_t = (AY_t + \mathbf{b} u_t)dt,\quad Y_T = \mathbf{0}, \\
        &dX^0_t =\left[\frac{\mu(T-t)}{2\kappa_0}+\frac{(m+1)}{2\kappa_0}p_t-\frac{1}{\kappa_0}\tilde{\gamma}^\top q_t+x_0\frac{m+1}{2\kappa_0}\left(\lambda_0-\frac{\gamma\kappa_0}{m+1}-\frac{1}{\sum_{i=m+1}^N\frac{1}{\lambda_i}}\right)\right.\\
        &\left.{\color{red}-}\frac{m+1}{2\kappa_0}x_0\tilde{\gamma}^\top r_t\right] dt,\quad X^0_0=0.
    \end{split}
    \end{equation}

\begin{proposition}\label{prop:broker.opt}
    Under Assumption \ref{ass:main}, there exists a unique classical solution $(p,q,r,Y,X^0)$ to \eqref{eq.pqr.def}, \eqref{eq.multiagents.ustar.def}, and the optimal order flow $u^*_t$ of the broker is equal to $dX^0_t/dt$. 
\end{proposition}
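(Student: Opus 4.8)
The plan is to characterize the optimal order flow $u^*$ via the first-order condition for the strictly concave functional $\hat{J}^\theta_P$ and to recognize that condition as the forward-backward system \eqref{eq.pqr.def}--\eqref{eq.multiagents.ustar.def}. By Proposition \ref{prop.unique.sol}, $\hat{J}^\theta_P$ is strictly concave on $\mathcal{U}_d=L^2([0,T])$ and possesses a unique maximizer $u^*$, and by Lemma \ref{le:frechet} it is Fréchet-differentiable. Strict concavity then makes the stationarity condition $D\hat{J}^\theta_P(u^*)(v)=0$ for all $v\in\mathcal{U}_d$ both necessary and sufficient for optimality. Since $D\hat{J}^\theta_P(u^*)(v)$ is the $L^2$-pairing of $v$ with the bracketed integrand of Lemma \ref{le:frechet}, the fundamental lemma of the calculus of variations forces that integrand to vanish for a.e.\ $t$; solving the resulting identity for $u^*_t$ returns exactly the drift prescribed for $X^0$ in \eqref{eq.multiagents.ustar.def}, i.e.\ $u^*_t=dX^0_t/dt$.

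For existence, I would build the tuple from $u^*$. Let $r$ be the (decoupled) solution of its linear ODE in \eqref{eq.pqr.def}, set $X^0_t:=\int_0^t u^*_s\,ds$, let $Y$ solve the backward linear equation of \eqref{eq.multiagents.ustar.def} with input $u^*$, and let $p,q$ solve their linear ODEs in \eqref{eq.pqr.def} driven by $Y$ and $X^0$ respectively. The first-order identity derived above guarantees that this tuple also satisfies the drift equation for $X^0$, so $(p,q,r,Y,X^0)$ solves the full system. A short bootstrapping argument upgrades this to a classical solution: with $u^*\in L^2$ the map $Y$ is absolutely continuous, hence $p,q,r$ are $C^1$, and the first-order identity then exhibits $u^*_t$ as a continuous function of $(t,p_t,q_t,r_t)$, so that $X^0\in C^1$ and every component is continuously differentiable.

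Uniqueness follows by reversing this reasoning. Given any classical solution $(p,q,r,Y,X^0)$, put $u_t:=dX^0_t/dt$; reading the $X^0$-drift in \eqref{eq.multiagents.ustar.def} as the vanishing of the Lemma \ref{le:frechet} integrand shows $D\hat{J}^\theta_P(u)(v)=0$ for all $v$, so $u$ is a critical point of $\hat{J}^\theta_P$ and strict concavity forces $u=u^*$. Since $r$ is decoupled and, once $u=u^*$ is fixed, $X^0$, $Y$, $q$ and $p$ are each the unique solution of their respective linear ODE with its prescribed boundary condition (first $X^0_t=\int_0^t u^*_s\,ds$, then $Y$, then $q$, then $p$), the whole tuple is uniquely determined.

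The principal difficulty is that \eqref{eq.pqr.def}--\eqref{eq.multiagents.ustar.def} is a genuinely coupled forward-backward system: the backward variables $Y$ and $p$ enter the forward drift of $X^0$, while $X^0$ in turn drives the forward variable $q$ and, through $u^*$, the backward variable $Y$. A direct attack would require solving a two-point boundary-value problem or a fixed-point iteration, with Assumption \ref{ass:main} needed to control the coupling. The variational route sidesteps this entirely: existence is inherited from the maximizer of Proposition \ref{prop.unique.sol} and uniqueness from its strict concavity, so Assumption \ref{ass:main} enters only implicitly, through the well-posedness already established there.
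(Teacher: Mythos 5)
Your proposal is correct and follows essentially the same route as the paper: existence of the unique maximizer $u^*$ from Proposition \ref{prop.unique.sol}, identification of the first-order condition $D\hat{J}^\theta_P(u^*)=0$ from Lemma \ref{le:frechet} with the forward--backward system, a continuity bootstrap to upgrade to a classical solution, and uniqueness by noting that any solution of the system yields a critical point of the strictly concave functional, hence must equal $u^*$. No gaps.
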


\begin{proof}
By Proposition \ref{prop.unique.sol}, we know that there exits a unique optimal control $u^*\in \mathcal{U}_d$. Then, Lemma \ref{le:frechet} implies that $X^{0,*}_t := \int_0^t u^*_s ds$ satisfies the second line of \eqref{eq.multiagents.ustar.def} for a.e. $t$, with $(p^*,q^*,r^*,Y^*)$ defined via \eqref{eq.pqr.def} and the first line of \eqref{eq.multiagents.ustar.def}. Noticing that $(p,q,r)$ are continuous, we conclude that $X^0$ is continuously differentiable and that it satisfies the second line of \eqref{eq.multiagents.ustar.def} for all $t$. Finally, for any solution $(p,q,r,Y,X^0)$ to \eqref{eq.pqr.def}, \eqref{eq.multiagents.ustar.def}, the Frechet derivative of $\hat J_P^\theta$ at $u_t:=dX^0_t/dt$ is zero, which implies that $u=u^*$ and in turn that $(p,q,r,Y)=(p^*,q^*,r^*,Y^*)$.
\qed
\end{proof}

\medskip

The following theorem summarizes all the results we have established.

\begin{theorem}\label{thm:multiagents.main}
For any $\theta\in\{0,1\}^N$ and any $\{R^{i,\theta}\}\in\RR^N$, the set of fees $\xi^*=\{\xi^{i,*}\}_{i\in\mathcal{N}(\theta)}$ given by \eqref{fee_optimal} is optimal for the broker's local optimization problem \eqref{eq.multiagents.broker.Opt.givenTheta}. Provided the broker chooses this set of fees, the following holds.

\begin{itemize}
\item 
Any choice of equilibrium strategies $\{\nu^{i,*}\}_{i=1}^N\in\mathcal{E}(\theta,\xi^*)$ that is optimal for the broker has the following structure: the strategies of independent agents, $\{\nu^{i,*}\}_{i\notin\mathcal{N}(\theta)}$, are determined uniquely by \eqref{eq.multiagents.tildenu.def}--\eqref{eq.multiagents.ode.direct}, and the clients' strategies $\{\nu^{i,*}\}_{i\in\mathcal{N}(\theta)}$ can be chosen arbitrarily subject to 
$$
\sum_{i\in\mathcal{N}(\theta)} \nu^{i,*} = u^*,\quad X^{i,*}_T=x^i_0+\int_0^T \nu^{i,*}_t dt=x^i_0+\frac{2}{\lambda_i\sum_{j=m+1}^N \frac{1}{\lambda_j}}\left(\int_0^Tu^*_t dt\right),
$$
where $u^*$ is defined in Proposition \ref{prop:broker.opt}. 

\item The broker's value $V^{\theta}_P$ for a given $\theta$, defined in \eqref{eq.multiagents.broker.Opt.givenTheta}, satisfies
\begin{equation}\label{eq.multiagent.broker.Vtheta.formula}
V^{\theta}_P= J^\theta_P(\xi^*) = \hat J^\theta_P(u^*),
\end{equation}
where $\xi^*=\{\xi^{i,*}\}_{i\in\mathcal{N}(\theta)}$ is given by \eqref{fee_optimal}, $\hat J^\theta_P$ is defined in \eqref{eq.Sec5.hatJ.def}, and $u^*$ is defined in Proposition \ref{prop:broker.opt}.

\item In any equilibrium $\{\nu^{j,*}\}_{j=1}^N\in\mathcal{E}(\theta,\xi^*)$, the value $V^{i,\theta}$ of the optimization problem \eqref{eq.sec2.indepOpt.def} of each independent agent $i\notin\mathcal{N}(\theta)$ is given by the right hand side of \eqref{eq.multiagents.IndepAgents.obj}, with $\{\nu^{j}\}_{j=1}^N$ replaced by $\{\nu^{j,*}\}_{j=1}^N$.
In addition, $V^{i,\theta}$ is the same for any choice of $\{\nu^{j,*}\}_{j=1}^N\in\mathcal{E}(\theta,\xi^*)$ that is optimal for the broker.
\end{itemize}
\end{theorem}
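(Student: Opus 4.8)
The plan is to assemble the statement from the results already established, organizing the argument around the three bullet points. The optimality of $\xi^*$ for \eqref{eq.multiagents.broker.Opt.givenTheta} is precisely Proposition \ref{prop:multiagents.optContracts}, which also yields the first equality $V^\theta_P = J^\theta_P(\xi^*)$ in \eqref{eq.multiagent.broker.Vtheta.formula}; I would open by recording this, together with the observation (established inside its proof) that $\xi^*\in\Sigma(\theta)$, so that the suprema under consideration are taken over nonempty sets.

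Next I would prove the second equality in \eqref{eq.multiagent.broker.Vtheta.formula} and simultaneously set up the value formula. Using the explicit description of $\cE(\theta,\xi^*)$ obtained immediately after Proposition \ref{prop:Nash.equil}, every equilibrium has the independent agents playing the unique responses $\tilde\nu^{i,*}(u)$ to the aggregate client flow $u=\sum_{i\in\cN(\theta)}\nu^i$, while the client strategies are otherwise unconstrained. Feeding this into \eqref{eq.JP.def} and carrying out the terminal-inventory minimization exactly as in \eqref{eq.JP2.def}--\eqref{eq.JP3.def} reduces the broker's problem to $\sup_{u\in\mathcal{U}}\hat J^\theta_P(u)$, with $\hat J^\theta_P$ as in \eqref{eq.Sec5.hatJ.def}. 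Proposition \ref{prop.unique.sol} then supplies a unique, deterministic maximizer $u^*$, and Proposition \ref{prop:broker.opt} characterizes it through the forward-backward ODE system; this gives $J^\theta_P(\xi^*)=\hat J^\theta_P(u^*)$.

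The structural claim in the first bullet follows from uniqueness. Since $u^*$ is the unique maximizer of $\hat J^\theta_P$, any equilibrium attaining the supremum in \eqref{eq.JP.def} must have aggregate client flow equal to $u^*$; by Proposition \ref{prop:Nash.equil} the independent strategies are then forced to equal $\tilde\nu^{i,*}(u^*)$, i.e. are uniquely determined by \eqref{eq.multiagents.tildenu.def}--\eqref{eq.multiagents.ode.direct}. The only remaining freedom is how the total client flow is split among clients: the aggregate constraint $\sum_{i\in\cN(\theta)}\nu^{i,*}=u^*$ together with the Lagrangian (quadratic-minimization) solution for $\{X^i_T\}$ used to pass from \eqref{eq.JP2.def} to \eqref{eq.JP3.def} pins down each $X^{i,*}_T$, while leaving the individual paths free. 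I would then verify both directions: that any such choice lies in $\cE(\theta,\xi^*)$ and attains $\hat J^\theta_P(u^*)$, and conversely that attaining the broker's value forces exactly these two constraints.

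For the last bullet, I would invoke the fact that the independent agent's objective \eqref{eq.multiagents.IndepAgents.obj} depends on the clients only through the aggregate $u$, and not on their individual strategies. Hence, once $u=u^*$ is fixed and the independent equilibrium is the unique $\tilde\nu^{i,*}(u^*)$, the value $V^{i,\theta}$ in \eqref{eq.sec2.indepOpt.def} equals the right-hand side of \eqref{eq.multiagents.IndepAgents.obj} evaluated at these strategies and is invariant across all broker-optimal equilibria, which differ only in the clients' individual paths. The proof is largely bookkeeping; the one point requiring genuine care is the third step, namely confirming that broker-optimality is \emph{equivalent} to the two constraints (aggregate flow $u^*$ and the optimal terminal-inventory split), so that relaxing either strictly lowers the objective. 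This is exactly where the uniqueness from Proposition \ref{prop.unique.sol} and the strict convexity of the terminal penalty $\sum_{i\in\cN(\theta)}\tfrac{\lambda_i}{2}(X^i_T)^2$ are used.
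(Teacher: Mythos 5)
Your proposal is correct and follows exactly the route the paper intends: the theorem is stated as a summary of the preceding results (Proposition \ref{prop:multiagents.optContracts} for optimality of $\xi^*$, the description of $\cE(\theta,\xi^*)$ after Proposition \ref{prop:Nash.equil}, the reduction \eqref{eq.JP2.def}--\eqref{eq.JP3.def}, and Propositions \ref{prop.unique.sol} and \ref{prop:broker.opt} for the unique deterministic $u^*$), and the paper gives no further argument beyond this assembly. Your added care about the equivalence between broker-optimality and the two constraints (aggregate flow equal to $u^*$ and the strictly convex terminal-inventory split) is exactly the right point to make explicit.
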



\medskip

Let us now propose an endogenous definition of the clients' reservation values.
Notice that these reservation values are needed in order to determine the broker's value $V^\theta_P$ (via \eqref{eq.multiagent.broker.Vtheta.formula} and \eqref{eq.Sec5.hatJ.def}) 
The latter, in turn, is needed to determine which $\theta$ is optimal.

\begin{definition}
\label{def:endog.R}
For every $\theta\in\{0,1\}^N$ and every $i\in\mathcal{N}(\theta)$ we define
\begin{equation}\label{eq.Rtheta.def}
R^{i,\theta} := V^{i,\theta'},
\end{equation}
where all entries of $\theta'$ are equal to those of $\theta$, except for the $i$-th entry which is equal to zero, and $V^{i,\theta'}$ is given by the right hand side of \eqref{eq.multiagents.IndepAgents.obj}, with $\theta$ replaced by $\theta'$ and with a choice of equilibrium strategies $\{\nu^{j}\}_{j=1}^N\in\mathcal{E}(\theta',\xi^*)$ that is optimal for the broker (see Theorem \ref{thm:multiagents.main} and note that $i\notin\mathcal{N}(\theta')$).
\end{definition}

Note that the above definition is consistent, in the sense that the right hand side of \eqref{eq.Rtheta.def} does not depend on $\{R^{i,\theta'}\}_i$. Indeed, the right hand side of \eqref{eq.multiagents.IndepAgents.obj} depends only on $\{\nu^{j}\}_{j\notin\cN(\theta')}$ and $u=\sum_{j\in\cN(\theta')}\nu^j$. In any equilibrium that is optimal for the broker, the latter quantities are determined uniquely by \eqref{eq.multiagents.tildenu.def}--\eqref{eq.multiagents.ode.direct} and by Proposition \ref{prop:broker.opt}, and they do not depend on $\{R^{i,\theta'}\}_i$.

\smallskip

The motivation for the above definition of a reservation value is clear. Namely, each client of the broker has an opportunity to trade directly in the market (i.e., to become an independent agent), hence, the reservation value of the client must equal the maximum objective value he can achieve by such trading.

\medskip

The above definition and Theorem \ref{thm:multiagents.main} give us a method for computing $V^{\theta}_P$, for each $\theta$. Then, an optimal $\theta$ can be found by maximizing $V^{\cdot}_P$. The latter is accomplished by an exhaustive search, in the next section, which is realistic for small $N$ or if the choices of $\theta$ are restricted to a small enough subset of $\{0,1\}^N$. 

\section{Numerical experiments}
\label{se:numerics}

In this section we describe five numerical simulations to study the structure of an optimal clients' portfolio for the broker, as well as the dependence of the values (i.e., the expected equilibrium profits) of the broker and of the agents on the price impact parameters $\{\kappa_i,\lambda_i\}$. In particular, we ask: does the broker include all agents in her optimal portfolio? does the broker prefer agents will low or high price impact coefficients? do the agents benefit from the presence of the broker?
In all experiments, we fix $\mu=1$, $T=1$, $x^i_0=0$.

\subsection{Two agents, dependence of broker's value on $\lambda$}
    
In this subsection, we set $N=2$, fix $\kappa_0=\kappa_1 = \kappa_2 = 10^{-1}$, $\lambda_0=10^{-3}$, and consider the dependence of the broker's value on $(\lambda_1,\lambda_2)$. We generate $M=100$ equidistant values $(\lambda_1,\lambda_2) \in [10^{-3},5\times 10^{-2}]\times[10^{-3},5\times 10^{-2}]$ and show the value of the broker in Figure \ref{fig:1}. 
In this experiment, the broker optimally takes both agents as her clients (i.e., $\theta^*=(1,1)$), for every value of $(\lambda_1,\lambda_2)$.

We observe that the broker's value is larger for small $(\lambda_1,\lambda_2)$. This can be explained as follows. Each agent benefits from the permanent impact of other agents, as they trade in the same directions (because they have the same initial inventories and observe the same signal $\mu$). In particular, a large value of $\lambda_1$ makes the second agent more optimistic about his profits in case he decides to trade directly in the market. The latter means that the second agent has larger reservation value, for which he needs to be compensated by the broker, which in turn reduces the profits of the broker from taking the first agent as a client. Analogous argument applies when $\lambda_2$ is large.

\begin{figure}[H]
        \centering
        \includegraphics[width=0.5\textwidth]{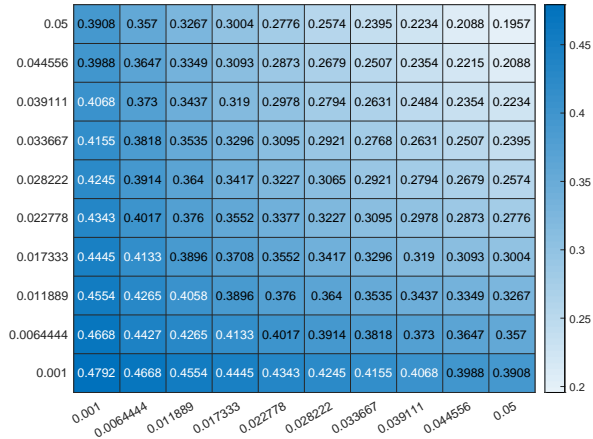}
        \caption{Broker's value as a function of $\lambda_1$ (horizontal axis) and $\lambda_2$ (vertical axis).}
        \label{fig:1}
\end{figure}

\subsection{Two agents, dependence of broker's value on $\kappa$}
\label{subse:numerics.2}
    
In this subsection, we set $N=2$, fix $\lambda_0=\lambda_1 = \lambda_2 =10^{-2}$, $\kappa_0=10^{-2}$, and consider the dependence of the broker's value on $(\kappa_1,\kappa_2)$. We generate $M=100$ equidistant values $(\kappa_1,\kappa_2) \in [10^{-2},10^{-1}]\times[10^{-2},10^{-1}]$ and show the value of the broker in Figure \ref{fig:2}. 
In this experiment, the broker optimally takes both agents as her clients (i.e., $\theta^*=(1,1)$), for every value of $(\kappa_1,\kappa_2)$.

We observe that the broker's value is larger for large $(\kappa_1,\kappa_2)$. This can be explained by the fact that large $\kappa_i$ reduces the value of agent $i$ in case he decides to trade directly in the market, thus reducing his reservation value and in turn increasing the broker's profit from taking this agent as a client.

\begin{figure}[H]
        \centering
        \includegraphics[width=0.5\textwidth]{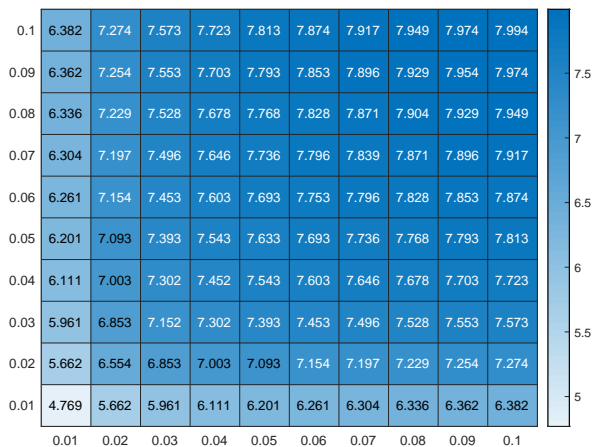}
        \caption{Broker's value as a function of $\kappa_1$ (horizontal axis) and $\kappa_2$ (vertical axis).}
        \label{fig:2}
\end{figure}

\subsection{Many agents, choice of portfolio via $\lambda$}
   
In this subsection, we set $N=100$, fix $\kappa_i= 5\times10^{-2}$, $\lambda_0 =10^{-4}$, $5\times \kappa_0=10^{-2}$, generate $\{\lambda_i\}_{i=1}^N$ as independent realizations of a uniform random variable on $(10^{-4},10^{-3})$, and consider the dependence of the broker's local value function $V^{\theta}_P$ on her portfolio of clients $\theta\in\{0,1\}^N$.
As it is too computationally expensive to compute $V^{\theta}_P$ for all $\theta\in\{0,1\}^N$, we restrict the analysis to the portfolios $\theta$ chosen as the $100p$-percentile of agents with the highest or lowest $\lambda_i$: e.g., $\theta$ may be such that $\theta_i=1$ if and only if $\lambda_i$ belongs to the bottom $10\%$ of $\{\lambda_i\}$. The local value of the broker for $p\in [0,1]$ is shown in Figure \ref{fig:3}.

Figure \ref{fig:3} shows that it is not optimal for the broker to take too many clients: in fact, it is better to take too few than too many. It is also worth mentioning that the broker is almost indifferent between choosing the agents with high or low $\lambda$, which indicates that permanent impact coefficient is not a good metric for choosing a portfolio of clients.

\begin{figure}[H]
        \centering
        \includegraphics[width=0.5\textwidth]{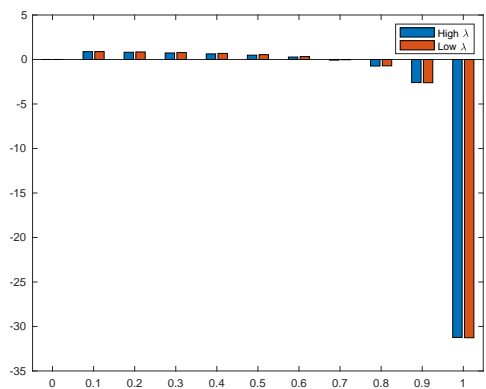}
        \caption{Broker's local value $V^{\theta}_P$ when she chooses $100p\%$ of agents with the lowest (red) or highest (blue) permanent impact.}
        \label{fig:3}
\end{figure}
     
\subsection{Many agents, choice of portfolio via $\kappa$}

In this subsection, we set $N=100$, fix $\lambda= 10^{-4}$, $\lambda_0 =5\times 10^{-5}$, $\kappa_0=10^{-3}$, generate $\{\kappa_i\}_{i=1}^N$ as independent realizations of a uniform random variable on $(10^{-4},10^{-3})$, and consider the dependence of the broker's local value function $V^{\theta}_P$ on her portfolio of clients $\theta\in\{0,1\}^N$.
As before, we restrict the analysis to the portfolios $\theta$ chosen as the $100p$-percentile of agents with the highest or lowest $\kappa_i$. The local value of the broker for $p\in [0,1]$ is shown in Figure \ref{fig:4}.

Figure \ref{fig:4} shows, once more, that it is not optimal for the broker to take too many clients. It also shows that choosing the agents with large $\kappa$ is better than choosing those with low $\kappa$, which is natural in view of the discussion in Subsection \ref{subse:numerics.2}.

\begin{figure}[H]
        \centering
        \includegraphics[width=0.5\textwidth]{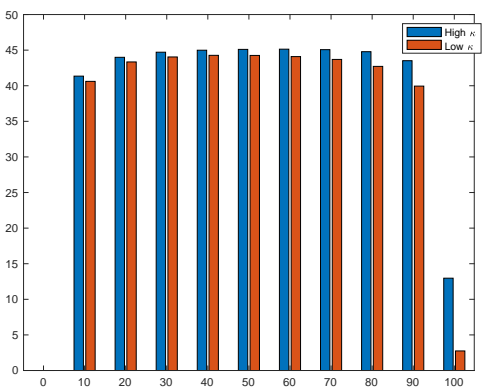}
        \caption{Broker's local value $V^{\theta}_P$ when she chooses $100p\%$ of agents with the lowest (red) or highest (blue) temporary impact.}
        \label{fig:4}
\end{figure}

\subsection{Do agents benefit from the presence of a broker?}
    
In this subsection, we set $N=8$, $\lambda_0=10^{-4}$, $\lambda_i=10^{-4}+\frac{i}{8}(10^{-4}-10^{-3})$, $\kappa_i=10^{-2}+\frac{i}{8}(10^{-1}-10^{-2})$, for $i=1,\ldots,8$, and consider the agents' values as $\kappa_0$ varies over $[0.001,0.01]$. Figure \ref{fig:5} shows the agents' values less their values in the absence of a broker (the latter corresponds to $\theta=0$) -- we refer to them as relative values. Figure \ref{fig:6} shows the optimal portfolio of clients for each $\kappa_0$.

Figure \ref{fig:5} indicates that all agents benefit from the presence of a broker, as their relative values are positive. This is explained by the fact that the broker's price impact is lower than those of the agents. The latter allows some of the agents to reduce their trading costs by becoming broker's clients. As a result, the overall temporary impact on the price is reduced, which benefits the other (independent) agents as well.

Figures \ref{fig:5} and \ref{fig:6} show that the relative values of the agents and the optimal portfolio of clients increase as $\kappa_0$ decreases. This is natural, as small $\kappa_0$ implies larger benefit for each agent who becomes a broker's client and reduces the overall temporary impact on price (which benefits everyone).

\begin{figure}[H]
        \centering
        \includegraphics[width=0.5\textwidth]{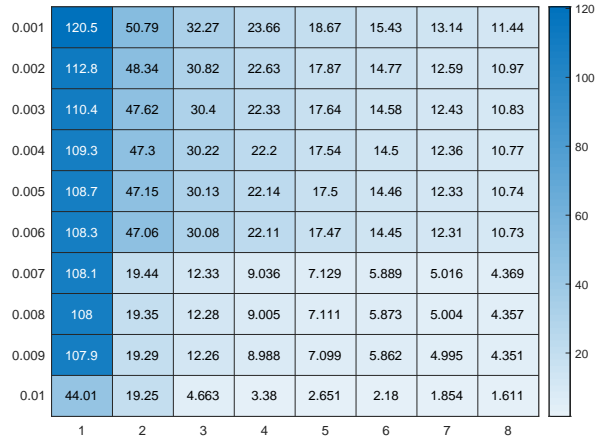}
        \caption{Agents' relative values across different $\kappa_0$ (vertical line) and across agents (horizontal line).}
        \label{fig:5}
\end{figure}    
    
\begin{figure}[H]
        \centering
        \includegraphics[width=0.5\textwidth]{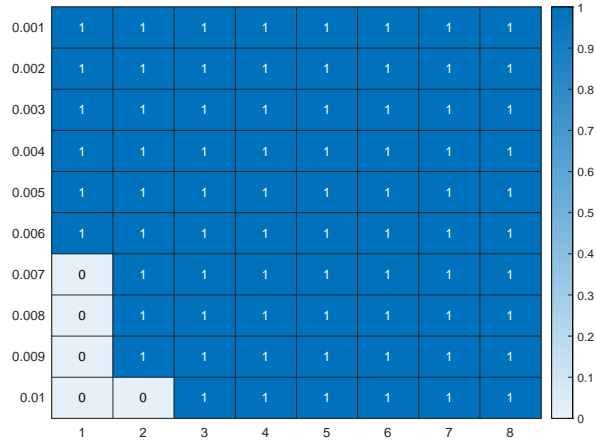}
        \caption{$\theta^*_i$ across different $\kappa_0$ (vertical line) and across $i$ (horizontal line).}
        \label{fig:6}
\end{figure}

\bibliographystyle{plain}
\bibliography{BrokerContract_refs}

\end{document}